\begin{document}
\newtheorem{emp}{Example}
\newtheorem{cn}{Conjecture}
\newtheorem{lm}{Lemma}
\newtheorem{thm}{Theorem}
\newtheorem{cor}{Corollary}
\newtheorem{df}{Definition}
\newtheorem{pf}{Proof}
\newtheorem{cond}{Condition}
\newtheorem{alg}{Algorithm}
\newtheorem{rmk}{Remark}

\title{On the Foundation of Sparse Sensing (Part I): Necessary and Sufficient Sampling Theory and Robust Remaindering Problem}

\hyphenation{IEEE Transactions on Signal Processing}

\author{Hanshen Xiao, Yaowen Zhang, and Guoqiang Xiao

\thanks{Hanshen Xiao is with CSAIL and the EECS Department, MIT, Cambridge, USA. E-mail: hsxiao@mit.edu.}
\thanks{Yaowen Zhang and Guoqiang Xiao are with the College of Computer and Information Science, Southwest University, Chongqing, China. E-mail: gqxiao@swu.edu.cn}
}

\maketitle

\begin{abstract}
\noindent In the first part of the series papers, we set out to answer the following question: given specific restrictions on a set of samplers, what kind of signal can be uniquely represented by the corresponding samples attained, as the foundation of {\em sparse sensing}. It is different from {\em compressed sensing}, which exploits the sparse representation of a signal to reduce sample complexity (compressed sampling or acquisition). We use {\em sparse sensing} to denote a board concept of methods whose main focus is to improve the efficiency and cost of sampling implementation itself. The “sparse” here is referred to sampling at a low temporal  or spatial rate (sparsity constrained sampling or acquisition), which in practice models cheaper hardware such as lower power, less memory and throughput. 

We take frequency and direction of arrival (DoA) estimation as concrete examples and give the necessary and sufficient requirements of the sampling strategy. Interestingly, we prove that these problems can be reduced to some (multiple) remainder model. As a straightforward corollary, we supplement and complete the theory of {\em co-prime sampling}, which receives considerable attention over last decade. 

On the other hand, we advance the understanding of the robust multiple remainder problem, which models the case when sampling with noise. A sharpened tradeoff between the parameter dynamic range and the error bound is derived. We prove that, for $N$-frequency estimation in either complex or real waveforms, once the least common multiple (lcm) of the sampling rates selected is sufficiently large, one may approach an error tolerance bound independent of $N$.
\end{abstract}


\section{introduction}
\noindent In 1928, Nyquist pointed out that for an arbitrary band-limited signal, once a system uniformly samples it at a rate that exceeds the signal’s highest frequency by at least a factor of two, termed as the Nyquist rate, the discrete sample sequence acquired can perfectly recover the original signal \cite{Nyquist}. In 1948, Shannon rigorously proved Nyquist's claim and pointed out that Nyquist rate is indeed also the necessary condition that a uniform sampling can uniquely recover the original band-limited signal in general \cite{Shannon}. The Nyquist-Shannon Sampling Theorem serves as the foundation of modern signal processing, which essentially characterizes the relationship between discrete samples and a continuous signal. As for the non-uniform sampling, Landau in 1967 gave a generic conclusion, known as the {\em Landau necessary condition}, that the average sampling rate must be twice the occupied bandwidth \cite{Landau}. However, those strong necessary conditions are not the end of the study on sampling theory:  Nyquist rate can be unnecessary given {\em more assumptions on signals} and undersampling becomes possible provided stronger prior knowledge. 

One representative is compressed sensing \cite{cs1, cs2, cs3, cs4}, rooted in an under-determined linear system, $y=\Phi x$, with $x \in \mathbb{R}^d$ and $\Phi \in \mathbb{R}^{d' \times d}$. In general, finding the unique solution $x$ is impossible when the number of measurements $d'$ is less than the dimensionality $d$. Nonetheless, with sparsity restriction, say $\|x\|_{0} \ll d$, and proper randomness in the encoding matrix $\Phi$, a sparse $x$ can be reconstructed with high probability by solving an $L_1$ convex programming $\min_{z \in \mathbb{R}^d} \|y - \Phi z\|_{1}$ \cite{Decoding}. The above statements provide a generic framework for signal reconstruction if a signal can be sparsely represented in some domain. Theoretically, the number of measurements can be almost only dependent on the sparsity parameter $\|x\|_{0}$ for a high-probability reconstruction. 

However, the implementation of {\em compressed sampling} is not always trivial in practice, which is especially true when handling the analog signal, i.e., how to transform the signal to a domain, where it can be represented by a sparse vector $x$ and acquire samples through random linear measurements. In general, before analyzing, communication or even storage of an analog signal, one has to first convert it into a bit sequence digitally. Different from simply a data processing, the digital representation of a signal is restricted by the power, memory and other hardware constraints of the sampler applied. Though compressed sensing presents a theoretical framework which allows fewer measurements to reconstruct a sparse vector, the cost of each measurement may not be always low. \footnote{We have to stress that the point we want to make here is that {\em not all} (sparse) signal processing tasks perfectly fit the compressed sensing framework due to the sampler restriction. However, in some scenarios, for example in Magnetic Resonance Imaging \cite{MRI2007, MRI2008, MRI2010} and some optical imaging systems \cite{optical}, the signal of interest has already been represented by a sparse vector and sampling operator itself can be modeled by a linear sensing, and thus compressed sensing techniques definitely strengthen the sampling efficiency.}

This raises a fundamental question that, given specific restrictions, what kind of signals can be represented by the corresponding sample sequence. Low energy-cost hardware with loose architecture requirement is always desired in many application scenarios. Recently, the series works \cite{distortion2015, distortion2018, distortion2018a}, by Kipnis, Eldar and Goldsmith, give a complete characterization of the fundamental trade-off among sampling rate, compression bitrate (quantification rate) and reconstruction error under the analog-to-digital framework, which generalizes the classic sampling theorem to more practical situations. 

On the other hand, to enable undersampling beyond the sampling theorem restriction, roughly speaking, there are two main relaxation directions. One is to make proper assumptions on the signal form, say standard complex waveforms, where reconstructing the signal is instance-based and equivalent to determining the unknown parameters, such as frequency and phase. The other is to use multiple undersampling samplers with a finer grained coordination. On the algorithmic level, there are two successful examples emerged during last two decades, the Chinese Remainder Theorem (CRT) based parameter estimation, short for CRT method in the following, pioneering by Xia \cite{crt1999, crt2000}, and the co-prime sampling (array), pioneering by Pal and Vaidyanathan \cite{co-prime2010, co-prime2011}. 

\noindent \textbf{CRT Based Reconstruction:} CRT method is spectrum based. A complex waveform $x(t) = e^{j2\pi ft}$, with an unknown  frequency $f$, is sampled uniformly at a rate $m$, $m<f$, and Fourier transform is applied to the discrete sequence. It is well known that undersampling will result in a spectrum aliasing: the location of the spectrum peak is indeed the $f$ modulo $m$, denoted by $\langle f \rangle_{m}$ in the following. Said another way, when $m<f$, the residue, $\langle f \rangle_{m} = f - m \lfloor  f/m \rfloor$, is of ambiguity to true $f$, unless the folding number $\lfloor  f/m \rfloor$ can be determined. On the other hand, CRT characterizes the relationship between a number and its residues modulo multiple smaller numbers, termed as moduli. Mathematically, CRT states that given $L$ moduli, the set of residues can uniquely determine a positive number $f$ once the least common multiple (lcm) of the moduli is larger than $f$. Therefore, with a careful selection of sampling rates, say pairwise co-prime, a frequency $f$ can be reconstructed via exponentially small sampling rates $O(f^{\frac{1}{L}})$. This idea is further generalized to other parameter estimations, such as phase unwrapping \cite{phase} with applications in synthetic aperture radar (SAR) \cite{sar2004, sar2017}. 

However, in practice, CRT method encounters two main challenges. One is {\em robustness}, where the residue system is non-weighted and conventional CRT is very sensitive to small error. The follow-up works \cite{rcrt2010, rcrt2014, ring-CRT} addressed the problem by adding redundancy to the moduli (sampling rate) by setting all moduli to share a sufficiently large common divisor. The other challenge is the {\em correspondence relation ambiguity} between multiple numbers and their residues. Though a single objective can be elegantly estimated in the above model, things becomes trickier when handling multiple objectives simultaneously. In general, for a signal $x(t) = \sum_{i=1}^N e^{j2\pi f_it}$ of multiple frequencies $\{f_1, f_2, ... ,f_{N}\}$, the correspondence between the spectrum peaks and the frequencies is indistinguishable without any further assumptions. Therefore, this problem cannot be simplified to $N$ independent single-frequency estimations. One strategy is to sacrifice the encoding dynamic range, i.e., the range of $f_i$ can be uniquely determined, to produce sufficient redundancy to enable decoding. Such tradeoff between the dynamic range and $N$ is studied in \cite{sharpened},  \cite{error-dynmaic-tradeoff, error-dynmaic-tradeoff-2}, whereas the tight bound is only known when $N=2$ \cite{twonumcrt}. To tackle both robustness and correspondence ambiguity simultaneously, the polynomial-time (statistical) robust multi-objective estimation is only known recently in \cite{multinumcrt2016, multinumrcrt, tvt2019, sp2021}.

\noindent \textbf{Co-prime Sampling (Array):} Different from CRT method, which handles the spectrum directly, co-prime sampling estimates the auto-correlation on the temporal domain. Technically, the foundation of co-prime sampling is Bezout Theorem, another fundamental number theory theorem, which states that for arbitrary two numbers $P$ and $Q$, there exist $\alpha$ and $\beta$ such that 
$$P \beta - Q \alpha = gcd(P,Q),$$
where $gcd(P,Q)$ denotes the greatest common divisor of $P$ and $Q$. Especially, if $P$ and $Q$ are co-prime, i.e., $gcd(P,Q) = 1$, as a straightforward corollary, there exists $\beta_1$ and $\alpha_1$ such that $$P \beta_1 - Q \alpha_1 = 1,$$ and multiplying $l$ on both sides, we have 
$$ P \cdot (l \beta_1) - Q \cdot (l \alpha_1) = l,$$
for arbitrary $l$.

With the above preparation, now we can give a high-level picture of co-prime sampling. Assume that the time interval corresponding to Nyquist rate is $T$ and two samplers with undersampling rates $P$ and $Q$, respectively, produce two discrete sample sequences $\{x[PTn], n \in \mathcal{N}\}$ and $\{x[QTn], n \in \mathcal{N}\}$ of a continuous signal $x(t)$. From Bezout theorem, once $P$ and $Q$ are co-prime, there always exist $\alpha_l$ and $\beta_l$ such that 
$$ P \cdot \beta_l - Q \cdot  \alpha_l = l,$$ and correspondingly the pair $(x[P\beta_l T], x[Q\alpha_l T] )$ renders an estimation of the auto-correlation $\mathbb{E}_{n} \big\{ x[nT] \cdot x^*[(n-l)T] \big\}$ at lag $l$. Since auto-correlation and signal spectrum are a pair of Fourier transform, the frequencies contained in a signal can be estimated from the signal spectrum.

Switching to the spatial case, where $T$ is replaced by the half of wavelength, the spatial sampling becomes to determine the physical location of sensors and such construction is termed as co-prime array \cite{co-prime2011}. 

Though both CRT method and co-prime sampling are rooted in number theory, or more precisely, the Diophantine equation theory, they were separately studied over last decade. As far as we know, our paper is the first work to connect them together and take them as the foundation of the study on sparse sensing. In short, our conclusion is somewhat surprising: \textbf{The underlying model in CRT method is closer to characterize the necessary and  sufficient condition that a signal can be uniquely represented in sparse sensing, while co-prime sampling indeed gives a more efficient “high-probability” reconstruction.} \footnote{The high probability here is either with respect to natural distribution of the parameter to be estimated, or a random undersampling rate selection for any given parameters to be estimated, which will be specified later.} 

We start with these two representatives of sparse sensing to give some intuition that the two fundamental questions we set out to address. 
\begin{enumerate}
    \item First, theoretically, how “sparse”, regarding both the temporal and spatial sampling rate, can we expect to be? Or equivalently, given the sparsity restriction on the sampling, what kind of signal can be uniquely represented by the attained samples.
    \item Second, under sparsity restriction, algorithmically, how “efficient”, regarding both the freedom degree of parameters to be estimated and the time complexity with associated performance, can the reconstruction scheme be? Also, can the efficiency be significantly improved by resorting to a small failure probability, relaxed from a deterministic reconstruction?   
\end{enumerate}

This paper mainly focuses on the first “encoding limitation” question to give a systematical study on the necessary and sufficient conditions of sparse sampling theory. Concrete sampling algorithms (“decoding”) will be presented in the second paper, where interestingly we show given sufficiently many samplers/senors, it could be sparse enough, i.e., arbitrarily low sampling frequency in temporal domain and arbitrarily large spacing between sensors, with a linear sampling time. 

\noindent \textbf{Organization and Contributions}: We summarize the organization of rest contents as follows. In Section \ref{sec:n-s-c}, we take frequency and DoA estimation as two concrete examples and give the necessary and sufficient conditions of the signal that can be represented given sampling sparsity restrictions. In Section \ref{sec:coprime_theory}, we give a complete analysis of co-prime sampling in Theorem \ref{co-prime complete}, which indicates that
co-prime sampling is indeed a high-probability reconstruction. In Section \ref{sec:comple_fre}, we consider the more general case where sampling with noise and we study this problem from a view of error-correcting codes. We provide a characterization of encoding dynamic range and error tolerance of complex waveforms in Theorem \ref{thm:complex_fre}. In Section \ref{sec:real_fre}, we provide further generalization to the scenario of real waveforms. We conclude in Section \ref{conclusion}.

\section{Necessary and Sufficient Conditions}
\label{sec:n-s-c}
\noindent In general, sampling can be viewed as a encoding procedure, where the objective continuous signal is transformed to a discrete sample sequence. Therefore, the necessary and sufficient conditions of sampling are equivalent to determining if there is a bijection between the objective signal and the encoded sample sequence. With the specification on the assumptions of signals and restrictions on the samplers, such conditions are deterministic, at least theoretically in the noiseless scenario. In the following, we take frequency and DoA estimation as two concrete examples in the temporal and spatial domain, respectively.

\subsection{Frequency and Phase Estimation}
\label{fre-est-sec}
\noindent Without loss of generality, we consider a complex waveform 
\begin{equation}
\label{original-sig}
  x(t) = \sum_{i=1}^N A_i e^{j(2\pi f_i t + \theta_i)},  
\end{equation}
where $A_i$ is the amplitude, $f_i$ is the frequency and $\theta_i$ is the phase of the $i$-th source. The continuous signal $x(t)$ is sampled by a group of samples with sampling rates $\{m_1, m_2, ... ,m_L\}$, respectively. Accordingly, $L$ sequences are obtained in a form, 
\begin{equation}
\label{sampled-seq}
   x_l[n] = \sum_{i=1}^N A_i e^{j\big(2\pi (f_i/m_l) n + \theta^l_i\big)},
\end{equation}
where $\theta^l_i$ is the phase of the the $i$-th source at the $l$-th sampler. Since Fourier Transform (FT) is an invertible transformation, applying FT to (\ref{sampled-seq}), we have, 
\begin{equation}
\label{FT-form} 
   \mathcal{F}_l(k) = FT(x_l[n]) = \sum_{i=1}^n A_i \delta(k - \langle f_i \rangle_{m_l}) e^{j\theta^l_i}, 
\end{equation}
where $\delta(k)$ denotes the indicator, which equals $0$ iff $k=0$. Without any prior assumptions on the rest parameters $\theta^l_i$ and $A_i$, the sequence $x_l[n]$ at the $l$-th sampler is uniquely characterized by an {\em unordered} set $\mathcal{R}_l = \{ r_{il} = \langle f_i \rangle_{m_l}, i=1,2,...,N\}$. The necessary and  sufficient conditions that the unknown frequency set $\bm{f} = \{f_1, f_2, ... ,f_n \}$ can be uniquely reconstructed with the $L$ sequences together are that, $N$ frequencies $\bm{f} \in [0,D)^N$, where $[0,D)$ represents the dynamic range, are bijective to the encoded $L$ unordered residue sets $\mathcal{R}_{[1:L]}$. It is worth mentioning that the above statement relies on the fact that no additional prior knowledge is assumed. If $A_i$ or $\theta^l_i$ can be distinguished, $\mathcal{R}_l$ becomes an “ordered” set, where the correspondence between each $r_{il}$ and $f_i$ is determined. Given the correspondence, the necessary and  sufficient conditions become $D < lcm(m_1, m_2, ... ,m_L)$, characterized by regular CRT. We summarize the conclusion as the following theorem.

\begin{thm}
\label{nsc-fre}
Without additional prior assumptions, the necessary and  sufficient conditions that $N$ frequencies $\bm{f}= \{f_1, f_2, ... ,f_N\}$ within $[0,D)^N$ defined in (\ref{original-sig}) can be reconstructed given a set of samplers with rates $\mathcal{M}=\{m_1, m_2, ... ,m_L\}$ are that there exists bijective correspondence between $\bm{f}$ and $\mathcal{R}_{[1:L]}$, where $\mathcal{R}_{l} = \{ r_{il} = \langle f_i \rangle_{m_l}, i=1,2,...,N \}$ is an unordered residue set. Especially, if $A_i$ or $\theta^l_i$ are assumed to be distinct and distinguishable, the necessary and  sufficient conditions become $D < lcm(m_1, m_2, ... ,m_L)$.
\end{thm}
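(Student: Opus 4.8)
The plan is to recast the claim as a statement about the injectivity of the sampling-induced encoding map $\Phi:\bm{f}\mapsto\mathcal{R}_{[1:L]}$ and then to resolve the two regimes — unordered versus ordered residue sets — separately. The crucial preliminary observation, already isolated in the discussion around (\ref{FT-form}), is that the totality of what any decoder can read off the $L$ sampled sequences is exactly the tuple of unordered residue sets $\mathcal{R}_{[1:L]}=(\mathcal{R}_1,\dots,\mathcal{R}_L)$: the spectral peak at the $l$-th sampler sits at $\langle f_i\rangle_{m_l}$, while its complex height $A_i e^{j\theta^l_i}$ is governed by the free nuisance parameters and therefore carries no frequency information. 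Granting this, reconstructability is, almost by definition, the injectivity of $\Phi$ on $[0,D)^N$, and the theorem reduces to reading off when $\Phi$ is injective.

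For the general statement I would argue both directions by contraposition. Sufficiency is immediate: if $\Phi$ is a bijection onto its image, a decoder extracts $\mathcal{R}_{[1:L]}$ from the spectra and applies $\Phi^{-1}$. For necessity, suppose $\Phi$ is not injective, so there exist $\bm{f}\neq\bm{f}'$ in $[0,D)^N$ with $\Phi(\bm{f})=\Phi(\bm{f}')$; I would then exhibit two genuinely indistinguishable waveforms by assigning amplitudes and phases freely so that the two signals produce identical sampled sequences, hence identical spectra, at every sampler, which shows no decoder can separate them. The delicate point is the treatment of residue collisions, i.e.\ indices with $\langle f_i\rangle_{m_l}=\langle f_j\rangle_{m_l}$, where two sources superpose into a single peak of height $A_ie^{j\theta^l_i}+A_je^{j\theta^l_j}$; I would handle this by noting that $\mathcal{R}_l$ records only peak \emph{locations} and that the amplitudes and phases are unconstrained, so equality of the unordered location sets already forces agreement of the observable data and collisions cannot be exploited to break the tie.

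For the special case, distinctness and distinguishability of $A_i$ (or of $\theta^l_i$) promote each unordered set $\mathcal{R}_l$ to an \emph{ordered} one: the amplitude or phase attached to a peak acts as a label that identifies, consistently across all $L$ samplers, which residue $r_{il}$ belongs to which source $f_i$. Once this correspondence is fixed, the joint problem decouples into $N$ independent single-number remainder problems, the $i$-th of which supplies the residues $(\langle f_i\rangle_{m_1},\dots,\langle f_i\rangle_{m_L})$. I would then invoke the generalized Chinese Remainder Theorem for (not necessarily coprime) moduli: a consistent system of residues modulo $m_1,\dots,m_L$ pins down its solution uniquely within a single period of length $lcm(m_1,\dots,m_L)$, consistency being automatic since the residues arise from an actual $f_i$. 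Hence every $f_i$ is recoverable once $[0,D)$ fits inside one such period, giving the stated threshold $D<lcm(m_1,\dots,m_L)$; conversely, if $D$ exceeds the lcm the two values $0$ and $lcm(m_1,\dots,m_L)$ both lie in $[0,D)$ and share all residues, defeating reconstruction.

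I expect the main obstacle to be the necessity half of the general case rather than the CRT computation, which is classical. The real work lies in arguing rigorously that equal unordered residue sets force the \emph{existence} of two valid signals with identical observable statistics — most subtly when collisions merge peaks and one must certify that the free amplitude and phase degrees of freedom can always be arranged to reproduce the same superposed spectrum. A secondary subtlety is confirming, in the special case, that the amplitude/phase labeling yields a \emph{consistent} cross-sampler correspondence even under collisions; once that labeling is justified, the reduction to $N$ scalar remainder instances, and therefore the $D<lcm(m_1,\dots,m_L)$ bound, follows directly.
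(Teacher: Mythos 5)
Your proposal is correct and follows essentially the same route as the paper, which states Theorem \ref{nsc-fre} as a summary of the discussion around (\ref{FT-form}): reconstructability is identified with injectivity of the map $\bm{f}\mapsto\mathcal{R}_{[1:L]}$ because the nuisance parameters $A_i, \theta^l_i$ make the unordered peak locations the only observable frequency information, and distinguishable amplitudes or phases turn the sets into ordered ones, reducing the problem to $N$ scalar remainder instances governed by CRT with the $D < lcm(m_1,\dots,m_L)$ threshold. Your treatment is in fact somewhat more careful than the paper's (e.g., the necessity construction via equal amplitudes and phases under residue collisions, which the paper leaves implicit), but it is a formalization of the same argument rather than a different approach.
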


Analogous to the frequency estimation, it is not hard to see that the necessary and sufficient conditions of the phase-difference based distance measurement share a similar form, where $\bm{f}$ will correspond to the distance(s) between the target and the sensors to be estimated. We will specify the dynamic range $D$ in Section \ref{sec:comple_fre} and generalize the above analysis to real waveforms in Section \ref{sec:real_fre}.

\subsection{DoA Estimation}
\noindent With a similar reasoning, we can also formulate the DoA estimation as an encoding and decoding problem. Analogous to undersampling a signal in time domain, spatial undersampling indeed also exists when the distance between adjacent array sensors is larger than a half of wavelength. In general, assume there are $N$ independent signals $\{s_i(t), i=1,2,...,N \}$, whose DoAs are $\bm{\theta}_{[1:N]} = \{\theta_1, \theta_2, ... , \theta_N\}$, respectively. Consider an linear array of $L$ sensors, which are placed at $\{p_1, p_2, ... ,p_L\}$, where without loss of generality, $p_i \geq 0$ and $p_1$ is set to be $0$ as the reference. Thus, the sample sequence at the $l$-th sensor can be written as 
\begin{equation}
\label{doa-formula}
    x_l[n] = \sum_{i=1}^N e^{-j\frac{2\pi p_l}{\lambda}\sin(\theta_i)} s_i[n]. 
\end{equation}
Therefore, let $\bm{x}[n]=(x_1[n], x_2[n], ... , x_L[n])^T$, and (\ref{doa-formula}) can be rewritten in a matrix form,
\begin{equation}
\label{doa-matrix-formula}
 \bm{x}[n] = \bm{A} \bm{s}[n],
\end{equation}
where $\bm{A}=(\bm{a}(\theta_1), \bm{a}(\theta_2),\dots,\bm{a}(\theta_N))$ and $\bm{s}[n]=(s_1[n],\dots,s_N[n])^T$. $\bm{A}$ is usually called {\em array manifold matrix}, where $\bm{a}(\theta_i)= [1, e^{-j\frac{2\pi p_2}{\lambda} \sin(\theta_i)}, ... ,e^{-j\frac{2\pi p_L}{\lambda} \sin(\theta_i)}]^T $, usually called the {\em steering vector}. Here, $\lambda$ represents the wavelength. Moreover, if each sensor takes $K$ snapshots for $n=1,2,...,K$, the samples set $\bm{x}[1:K] = \bm{A}\bm{S}$, where $\bm{S}=(\bm{s}[1], \bm{s}[2], ... ,\bm{s}[K])$. We assume $K>N$ in the following to guarantee sufficient samples. Without any restriction, $\bm{\theta}_{[1:N]}$ are distributed within $[-\pi/2, \pi/2)^N$. Generally, the distance between any adjacent sensors is set to be $\lambda/2$ for most uniform arrays. When the distance between adjacent sensors is larger than $\lambda/2$, there is an ambiguous range for the $l$-th sensor \cite{benesty2017fundamentals} $$\Psi_{\theta_i} = [-\frac{\pi}{2}, -\arcsin(\frac{\lambda}{p_l})] \cup [\arcsin(\frac{\lambda}{p_l}), \frac{\pi}{2})$$ regarding $\theta_i$. In the following, we present a generic characterization on the condition that $N$ DoAs can be uniquely encoded given an arbitrary array.  

\begin{thm}
\label{nsc-doa}
When $\bm{S}$ is a full-rank matrix, the necessary and sufficient conditions that $N$ DoAs, $\bm{\theta}_{[1:N]}$, can be uniquely reconstructed by sample sequences from an array of $L$ sensors with locations specified as $\{p_1=0, p_2, ... ,p_L\}$ are that $\mathcal{C} \geq 2$, where 
\begin{equation}
   \mathcal{C} = \min_{c_2, ... ,c_L \in \mathbb{Z}^+} \frac{c_2\lambda}{p_2} = \frac{c_3\lambda}{p_3} = ... = \frac{c_L\lambda}{p_L},
\end{equation}
is the least common multiple (lcm) of the numbers $\{ \frac{\lambda}{p_2}, \frac{\lambda}{p_3}, ... , \frac{\lambda}{p_L}\}.$
\end{thm}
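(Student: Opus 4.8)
The plan is to reduce the entire reconstruction question to the injectivity of the single-source steering map $\theta \mapsto \bm{a}(\theta)$ on the dynamic range $[-\pi/2,\pi/2)$, and then to translate that injectivity into an arithmetic condition on the sensor spacings. First I would exploit the full-rank hypothesis on $\bm{S}$: since $\bm{x}[1:K] = \bm{A}\bm{S}$ with $\bm{S}$ of full row rank, the column space of the observation coincides with $\mathrm{range}(\bm{A})$, so the samples determine the array manifold matrix $\bm{A}$ up to the inherent column permutation, the scale of each column being pinned down by the reference entry $p_1 = 0$ that forces the first coordinate of every steering vector to equal $1$. Consequently the (unordered) set of steering vectors $\{\bm{a}(\theta_i)\}$ is recoverable from the samples, and since $\sin$ is a strictly increasing bijection from $[-\pi/2,\pi/2)$ onto $[-1,1)$, reconstructing $\bm{\theta}_{[1:N]}$ is possible if and only if distinct directions are carried to distinct steering vectors.

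Next I would characterize exactly when two directions collide. Writing $u = \sin\theta$ and $u' = \sin\theta'$, the equality $\bm{a}(\theta) = \bm{a}(\theta')$ holds iff $e^{-j 2\pi (p_l/\lambda)(u-u')} = 1$ for every $l$, i.e. iff $\frac{p_l}{\lambda}(u - u') \in \mathbb{Z}$ for all $l = 2,\dots,L$. Thus the collision set for $\Delta = u - u'$ is $\bigcap_{l} \frac{\lambda}{p_l}\mathbb{Z}$, the set of common integer multiples of $\lambda/p_2,\dots,\lambda/p_L$. This intersection is the discrete subgroup $\mathcal{C}\,\mathbb{Z}$ whose least positive generator is precisely the lcm $\mathcal{C}$ of the theorem (with $\mathcal{C} = +\infty$ and trivial collision set $\{0\}$ exactly when the ratios are incommensurable). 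Since $u, u' \in [-1,1)$ forces $\Delta \in (-2,2)$, the map is injective iff no nonzero element of $\mathcal{C}\,\mathbb{Z}$ lands in $(-2,2)$, i.e. iff $\mathcal{C} \geq 2$; the half-open interval is what turns the boundary case $\mathcal{C} = 2$, where $\Delta = 2$ is unattainable, into an equality rather than a strict inequality.

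For necessity I would run this construction backwards: if $\mathcal{C} < 2$, then $\Delta = \mathcal{C} \in (0,2)$ is realizable as $\sin\theta - \sin\theta'$ for some $\theta \neq \theta'$ in $[-\pi/2,\pi/2)$ (the set $\{u' : u',\, u'+\mathcal{C} \in [-1,1)\}$ is nonempty precisely when $\mathcal{C} < 2$), and these two directions yield identical steering vectors, hence identical samples for any source matrix, so that even a single source is unresolvable. For sufficiency I would combine the injectivity above with the manifold recovery of the first paragraph. This mirrors the residue picture of Theorem \ref{nsc-fre} under the dictionary $\sin\theta \leftrightarrow f$ and modulus $\lambda/p_l \leftrightarrow m_l$, so that $\mathcal{C} \geq 2$ is the direct spatial analogue of the dynamic-range-versus-lcm condition.

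The step I expect to be the main obstacle is the first one: rigorously passing from the raw samples $\bm{X}$ to the actual steering vectors, and not merely to $\mathrm{range}(\bm{A})$, since a priori a recovered $N$-dimensional subspace could contain more than $N$ manifold points and thereby admit a spurious alternative factorization $\bm{A}'\bm{S}'$ with a different full-rank $\bm{S}'$. The full-rank hypothesis on $\bm{S}$, the reference-sensor normalization, and the discreteness of the collision lattice are what must be combined to exclude this; once the steering-vector set is pinned down, the arithmetic of $\mathcal{C}$ finishes the argument. I would also handle the commensurability dichotomy in the definition of $\mathcal{C}$ with care, treating the incommensurable case $\mathcal{C} = \infty$ separately so that the inequality $\mathcal{C} \geq 2$ remains meaningful.
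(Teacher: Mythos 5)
Your proposal is correct and follows essentially the same route as the paper's own proof: use the full-rank assumption on $\bm{S}$ to reduce reconstruction to injectivity of the steering map, rewrite a collision $\bm{a}(\theta)=\bm{a}(\theta')$ as the real-modulus congruence $\sin(\theta) \equiv \sin(\theta') \bmod \frac{\lambda}{p_l}$ for all $l$, identify the collision set with the lattice generated by the generalized lcm $\mathcal{C}$ (the paper's generalized CRT step), and conclude $\mathcal{C} \geq 2$ from $\sin(\theta) \in [-1,1)$. The one obstacle you flag---rigorously recovering the steering vectors themselves rather than merely $\mathrm{range}(\bm{A})$, excluding spurious factorizations $\bm{A}'\bm{S}'$---is glossed over in the paper as well, which simply declares the encoding between $\bm{A}$ and the snapshots invertible, so your treatment is, if anything, more explicit about that point.
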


\begin{proof}
Since $\bm{S}$ is assumed to be of full-rank, the encoding procedure between $\bm{A}$ and $K$ snapshots $\bm{x}[1:K]$ is invertible. Thus, it is equivalent to considering the condition that the mapping between $\bm{\theta}_{[1:N]}$ and $\bm{A}$ is bijection, which means for arbitrary $\theta \not = \theta'$, $\bm{a}(\theta) \not = \bm{a}(\theta')$. On the other hand, if $\bm{a}(\theta) = \bm{a}(\theta')$, it is equivalent to that, there exist $k_l \in \mathbb{Z}$, $l=1,2,...,L$ and 
\begin{equation}
\label{doa1-ambiguity}
 2\pi \frac{p_l}{\lambda} \sin(\theta_i) = 2\pi (\frac{p_l}{\lambda} \sin(\theta'_i)+ k_l).   
\end{equation}
Rewriting (\ref{doa1-ambiguity}) into a modulo form, we have 
\begin{equation}
\label{doa-ambiguity}
\sin(\theta_i) \equiv \sin(\theta'_i) \mod \frac{\lambda}{p_l}, l=1,2,...,L.    
\end{equation}
Here, we use the generalized modulo operation where for any two real numbers, $\langle a \rangle_b = a - \lfloor \frac{a}{b} \rfloor b$. Thus, a bijection is equivalent to that arbitrary $\sin(\theta_i) \in [-1,1)$ has a unique residue representation modulo $\frac{\lambda}{p_l}, l=2,3,...,L.$ Hence, the rest of the proof is indeed to describe a generalized CRT with real moduli: Indeed, CRT for real numbers is analogous to the integer case, where the largest representability is still determined by a generalized least common multiple of $\{\frac{\lambda}{p_l}, l=2,3,...,L\}$, defined by 
$$ \mathcal{C} = \min_{c_2, ... ,c_L \in \mathbb{Z}^+} \frac{c_2\lambda}{p_2} = \frac{c_3\lambda}{p_3} = ... = \frac{c_L\lambda}{p_L}. $$ The proof is straightforward. On one hand, it is trivial to verify that the residues of $a=0$ and $b=\mathcal{C}$ modulo $\{\frac{\lambda}{p_l}, l=2,3,...,L\}$ are identical. On the other hand, for any two real numbers $a, b \in [0,\mathcal{C})$, if they share exactly the same residues, which implies $$ a-b \equiv 0 \mod \frac{\lambda}{p_l},$$ for $l=2,3,...,L$, a contradiction to the minimization assumption of $\mathcal{C}$. Finally, it is noted that $\sin(\theta) \in [-1,1)$, therefore we need $\mathcal{C}$ to be at least $2$ and the claim follows. 
\end{proof}

Analogous to the DoA estimation, in the case of phase unwrapping (Doppler shift) in multi-frequency antenna array synthetic aperture
radar (SAR), the condition is also similar to Theorem \ref{nsc-doa}.

Throughout the two examples, we have shown that the fundamental problems to characterize the frequency and DoA estimation under sparse sampling are closely related to the remainder model. This is not surprising due to the cycle property of waveform, where sampling itself can be viewed as somewhat modulo operation on the signal. Though the necessary and  sufficient conditions describe a preliminary picture that when a signal can be perfectly reconstructed given samplers with specific sparsity restriction, there are two important generalizations worth further exploring: 
\begin{enumerate}
    \item How could restrictions in Theorem \ref{nsc-fre} and \ref{nsc-doa} be further relaxed resorting to some failure probability? Beyond perfect reconstruction on all signals, compressed sensing provides a successful example to produce high probability recovery but much sharpened sample complexity via random sampling.  
    
    \item Theorem \ref{nsc-fre} and \ref{nsc-doa} only characterize the noiseless case while the noise and quantification error are almost inevitable in practice. With further robustness consideration, how should Theorem \ref{nsc-fre} and \ref{nsc-doa} be modified.  
\end{enumerate}
We address the above two questions in the following.

\section{A Complete Analysis of Co-prime Sampling}
\label{sec:coprime_theory}
\noindent In the Introduction, we briefly mentioned the key idea of co-prime sampling  \cite{co-prime2010, coprime-conf}. The elegant construction by selecting two co-prime under-sampling rates, $P$ and $Q$, gives a unified approach to estimate the auto-correlation of the sample sequence attained by Nyqiust rate. Due to Bezout theorem, one can properly select infinitely many pairs from two undersampled sequences such that their sampling-time difference to be any multiple of the time interval sampled by Nyquist rate. Once one can exactly recover the auto-correlation of the sequence sampled at Nyquist rate, according to Wiener–Khinchin theorem, the power spectrum can be reconstructed by applying Fourier transform on the auto-correlation.

Co-prime sampling theory seems to violate the necessary and sufficient condition described in Theorem \ref{nsc-fre}, since $P$ and $Q$ can be arbitrarily large, i.e., one can use two arbitrarily small sampling rates to recover the spectrum of a signal with a arbitrarily large bandwidth. As shown below, we will rigorously prove co-prime sampling is indeed an efficient “high-probability” reconstruction but a failure could happen.   

Assuming that two samplers whose sampling rates are $m_1= 1/{PT}$ and $m_2 = 1/{QT}$, respectively, where the Nyquist rate is $1/{T}$. With the same setup described in Section \ref{fre-est-sec}, $x(t) = \sum_{i=1}^N e^{j 2\pi f_i t}$, where we normalize the phase and amplitude without loss of generality, is sampled and two sequences are attained as 
\begin{equation}
\label{coprime-seq}
    x_1[n_1] = \sum_{i=1}^N e^{j 2\pi f_i TPn_1}, ~~ x_2[n_2] = \sum_{i=1}^N e^{j 2\pi f_i TQn_2}. 
\end{equation}
Since $P$ and $Q$ are co-prime integers, the two sequences indeed form a cycle of time $PQT$. In the $k$-th cycle, i.e., we consider the sampling time at $n_1 = Qk+r_1$ and $n_2 = Pk+r_2$, and denote the sample vectors by $\bm{x}^1_k = (x_1[Qk+1], x_1[Qk+2], ... ,x_1[Q(k+1)])$ and $\bm{x}^2_k = (x_2[Pk+1], x_2[Pk+2], ... ,x_2[P(k+1)])$.  Following \cite{co-prime2010, coprime-conf}, let $\tilde{\bm{x}}_k = (\bm{x}^1_k, \bm{x}^2_k)$, be the Cartesian product of the two vectors $\bm{x}^1_k$ and $\bm{x}^2_k$, and consider $\bm{R} = \mathbb{E}_l \big \{(\tilde{\bm{x}}^l)^T \cdot (\tilde{\bm{x}}^l)^* \big\}$, which ideally is expected to be the auto-correlation estimation. Here, $\bm{a}^T$ and $\bm{a}^*$ denotes the transpose and conjugate of a vector $\bm{a}$, respectively.  

When $N=1$, the co-prime sampling theory is indeed a special case of Theorem \ref{nsc-fre}. It is noted that \begin{equation}
    lcm(\frac{1}{PT}, \frac{1}{QT}) = \big(\min_{c_1, c_2 \in \mathbb{Z}^+} \frac{c_1}{PT} =  \frac{c_2}{QT} \big) = \frac{1}{T},
\end{equation}
where the quality holds when $c_1=P$ and $c_2=Q$ due to the co-prime assumption on $P$ and $Q$. 

However, when $N>1$, co-prime sampling comes with a failure, where the estimation $\bm{R}$ could be biased. Consider $\mathbb{E}_k\{ x_1[Qk+u] \cdot x_2[Pk+v]\}$, which is took as an estimation of auto-correlation at lag $(uP-vQ)$, i.e., $\mathbb{E}_{n} \big\{x[nT] \cdot x^*[(n-(uP-vQ))T] \big\}$. With some simple calculation,
\begin{equation}
\label{difference_1}
\begin{aligned}
    \mathbb{E}_k & \{ x_1[Qk+u] \cdot x^*_1[Pk+v]\}  \\
    & =  \mathbb{E}_{n} \{x[nT] \cdot x^*[(n-(uP-vQ))T]\} \\
    & + \underline{\sum_{i \not= l}\mathbb{E}_{k} \big[e^{j 2\pi f_i P (Qk+u)T} \cdot e^{-j 2\pi f_l Q (Pk+v)T}  \big]}
\end{aligned}
\end{equation}
Similarly, if the estimation is from the cross product from one sample sequence, say $x_1[\cdot]$, where we expect $\mathbb{E}_k \big\{ x_1[kQ+u] \cdot x_1[kQ+v] \big\}$ equals the auto-correlation at lag $(u-v)P$, i.e., $\mathbb{E}_{n} \big\{x[nT] \cdot x^*[(n-(u-v)P)T] \big\}$, it becomes 
\begin{equation}
\label{difference_2}
\begin{aligned}
    \mathbb{E}_k & \{ x_1[Qk+u] \cdot x^*_1[Pk+v]\}  \\
    & =  \mathbb{E}_{n} \{x[nT] \cdot x^*[(n-(u-v)P)T]\} \\
    & + \underline{\sum_{i \not= l}\mathbb{E}_{k} \big[e^{j 2\pi f_i P (Qk+u)T} \cdot e^{-j 2\pi f_l P (Qk+v)T}  \big]}
\end{aligned}
\end{equation}
Thus, co-prime sampling forms an unbiased estimation if and only if the underlying terms in (\ref{difference_1}) and (\ref{difference_2}) should equal 0. With a bit more calculation, a sufficient condition is that, 
\begin{equation}
 \mathbb{E}_{k} e^{j2\pi k \frac{PQ(f_i-f_l)}{f_s}} = 0 
\end{equation}
which requires that $PQ (f_i - f_l )T$ cannot be an integer. 

\begin{thm}
\label{co-prime complete}
When $N>2$, the co-prime sampler defined in (\ref{coprime-seq}) can perfectly reconstruct the $N$ frequencies $\{f_1, f_2, ... ,f_N\}$, when for any $i \not = j \in [1:N]$, $PQ(f_i - f_j)$ cannot be divided by the Nyquist rate $1/{T}$. \footnote{Another takeaway from the above analysis is that, in general by applying the co-prime sampling idea on generic continuous signal, the assumption that the components are independent is not sufficient since after sampling (processing), the independence may not hold.}
\end{thm}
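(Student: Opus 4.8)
The plan is to reduce the theorem to a single analytic fact: the vanishing of the two underlined leakage terms in (\ref{difference_1}) and (\ref{difference_2}). Once these cross-terms are zero for every ordered pair $i \neq l$, the co-prime estimator returns the exact auto-correlation $\mathbb{E}_{n}\{x[nT]\,x^*[(n-l)T]\}$ at each evaluated lag. Since $P$ and $Q$ are co-prime, Bezout's theorem (as recalled in the Introduction) lets us pick integers $u,v$ with $uP-vQ = l$ for \emph{any} integer lag $l$, so the full auto-correlation sequence is recovered; by the Wiener--Khinchin theorem its Fourier transform yields the power spectrum, whose peaks are exactly $f_1,\dots,f_N$. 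I would also note that the same hypothesis forces $(f_i-f_l)T \notin \mathbb{Z}$ (else $PQ(f_i-f_l)T$ would be an integer), so the target auto-correlation itself reduces to the clean sum $\sum_i e^{j2\pi f_i lT}$ with no residual cross-terms, guaranteeing exactly $N$ spectral peaks. Thus the entire burden falls on the leakage terms, which is precisely the point flagged in the theorem's footnote.

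First I would isolate the $k$-dependence. Factoring the exponent in (\ref{difference_1}) separates a phase that is constant in $k$ from a factor linear in $k$:
\[
\mathbb{E}_k\big[e^{j2\pi f_i P(Qk+u)T}\, e^{-j2\pi f_l Q(Pk+v)T}\big] = c_{il}\,\mathbb{E}_k\big[e^{j2\pi k\,PQ(f_i-f_l)T}\big],
\]
where $c_{il}=e^{j2\pi(f_i Pu - f_l Qv)T}$ is independent of $k$. An identical reduction applies to the intra-sequence term of (\ref{difference_2}): its $k$-linear factor is again $e^{j2\pi k\,PQ(f_i-f_l)T}$, since the $Qk(f_i-f_l)$ contribution survives while the fixed offset is absorbed into a constant prefactor. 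Hence, for both estimators, the leakage vanishes precisely when the Cesàro average $\mathbb{E}_k\big[e^{j2\pi k\,PQ(f_i-f_l)T}\big]$ is zero.

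The hard part will be making this average rigorous, because $\mathbb{E}_k$ denotes an average over infinitely many cycles rather than a finite sum. Writing $\omega = PQ(f_i-f_l)T$, I would split into two regimes. If $\omega$ is rational, say $\omega = p/q$ in lowest terms with $q>1$ (which $\omega\notin\mathbb{Z}$ forces), the exponential is $q$-periodic and the average over one period is the geometric sum $\tfrac{1}{q}\sum_{k=0}^{q-1} e^{j2\pi kp/q}=0$, so every block average and hence the limit is exactly $0$. If $\omega$ is irrational, the sequence $k\omega \bmod 1$ is equidistributed on $[0,1)$ by Weyl's criterion, whence $\frac{1}{K}\sum_{k=0}^{K-1} e^{j2\pi k\omega}\to \int_0^1 e^{j2\pi t}\,dt = 0$. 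In both branches $\omega\notin\mathbb{Z}$, i.e. $PQ(f_i-f_l)$ not divisible by $1/T$, sends the leakage to $0$; the sole failure mode is $\omega\in\mathbb{Z}$, where $e^{j2\pi k\omega}\equiv 1$ and the average is $1$, producing the bias anticipated before the theorem.

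Finally, I would assemble the pieces. Imposing the stated condition on all $\binom{N}{2}$ pairs annihilates every summand of the leakage $\sum_{i\neq l}(\cdots)$ in both (\ref{difference_1}) and (\ref{difference_2}) simultaneously, so the estimator is unbiased at each realizable lag; combined with the Bezout lag-coverage and Wiener--Khinchin reconstruction described above, the $N$ frequencies follow. I expect the equidistribution step to be the principal obstacle, since it is where the number-theoretic content of the hypothesis (divisibility of $PQ(f_i-f_l)$ by the Nyquist rate) genuinely enters and where the rational and irrational cases must be dispatched uniformly.
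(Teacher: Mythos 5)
Your proposal is correct and follows essentially the same route as the paper: the paper's own justification of Theorem~\ref{co-prime complete} is precisely the computation of the underlined cross-terms in (\ref{difference_1}) and (\ref{difference_2}) and the observation that unbiasedness reduces to $\mathbb{E}_{k}\,e^{j2\pi k PQ(f_i-f_l)T}=0$, which fails exactly when $PQ(f_i-f_l)T\in\mathbb{Z}$; you merely make the averaging step rigorous and spell out the Bezout lag-coverage and Wiener--Khinchin assembly that the paper treats informally. One minor simplification: the rational/irrational split via Weyl equidistribution is unnecessary, since for any non-integer $\omega$ the partial sums satisfy $\bigl|\sum_{k=0}^{K-1}e^{j2\pi k\omega}\bigr|\leq 2/|1-e^{j2\pi\omega}|$, so the Ces\`{a}ro average vanishes in both cases by the geometric-series bound alone.
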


Theorem \ref{co-prime complete} characterizes a failure set, within which the signal cannot be uniquely reconstructed. However, it is worth noting that such set is very sparse (indeed has zero measure) across the whole domain $[0,1/T]^{N}$. Compared to the best known dynamic range in CRT decoding method \cite{TSP2018}, which states that given $L$ samplers with co-prime rates $\{m_1, m_2, ... , m_L\}$, one can uniquely recover $\bm{f} \in [0,D]^N$ for $D = \Omega\big( (\prod_{l=1}^L m_l)^{1/N} \big)$. Theorem \ref{co-prime complete} is inspiring and gives a positive example that, resorting to a negligible failure probability (when $\bm{f}$ is assumed to be uniformly distributed, or a random selection of $P$ and $Q$), it is possible that one can use much sharpened undersampling rate to almost reconstruct the signal, compared to the deterministic decoding bound. 

\section{Robust Reconstruction for Frequency Estimation of Complex Waveform}
\label{sec:comple_fre}
\noindent Throughout the previous sections, we have derived the necessary and sufficient reconstruction condition in the {\em noiseless} case. In the rest of this paper, we set out to advance the understanding of robust reconstruction and stick to the more complicated (multiple) frequency estimation model. We adopt the noise model considered in  \cite{phase, rcrt2010, rcrt2014, twonumrcrt, multinumrcrt} and study the robustness from an error-correcting decoding viewpoint.

As described in Theorem \ref{nsc-fre}, under $L$ sampling rates $m_{[1:L]}$, the frequency set $\bm{f}=(f_1, f_2, ... , f_N)$ to be estimated is essentially encoded into $L$ unordered residue sets $\mathcal{R}_{[1:L]}$, where $\mathcal{R}_{l}=\{ \langle f_i \rangle_{m_l}, i=1,2,...,N \}$. To formally study the robust decoding condition, we introduce the {\em robust remaindering problem} as follows.

\begin{df}[Robust Remaindering Problem]
Given $m_{[1:L]}$ and $N$, what is the relationship between the dynamic range $D$ and the maximal error bound $\delta$ such that for arbitrary $\bm{X}=(X_1, X_2, ... , X_N) \in [0,D)^N$, provided $L$ noisy unordered residue sets $\mathcal{R}_{[1:L]}$, where $\mathcal{R}_{l}=\{ \langle X_i \rangle_{m_l}+\Delta_{il}, i=1,2,...,N \}$ for arbitrary noise $|\Delta_{il}| < \delta$, $\bm{X}$ can be robustly estimated by $\hat{\bm{X}}=\{\hat{X}_1, \hat{X}_2, ... , \hat{X}_N \}$ and the estimation error $|X_i - \hat{X}_i |  = O(\delta)$ for any $i$. 
\end{df}

Before we proceed to study the tradeoff between $D$ and $\delta$ in general, we first provide the following results on the sampling rate selection. We first move our attention to a single number case, i.e., $N=1$. In the following, we prove the optimal modulus selection in terms of the error tolerance capacity. 

\subsection{Sampling Rate Selection}
\noindent In \cite{TSP2018}, it is shown that when $N=1$, the maximal error bound is,
\begin{equation}
\delta = \min \max_{ \min m_l \leq X < D} \frac{\Vert{\textbf{x}}\Vert_{\infty}}{4}
\end{equation}
Here, $\textbf{x}$ denotes the residue representation of a number $X$ and $\Vert{\textbf{x}}\Vert_{\infty}$ is the largest coordinate.
In the following, we show a corollary concerning the lower bound of $\delta$.
\begin{cor}
\label{best}
When $D = lcm (m_1, m_2, ... ,m_L)$,
\begin{equation}
\delta \leq \min_{S \subset  \{1,2,...,L\}} gcd( lcm(m_l, l \in S), lcm(m_l, l \in \bar{S}))
\end{equation}
\end{cor}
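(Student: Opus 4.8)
The plan is to read the bound off from the single-number characterization quoted from \cite{TSP2018}. The key point I would exploit is that the tolerable error cannot exceed (a constant fraction of) the $\ell_\infty$ residue norm of \emph{any} nontrivial number $X$ lying in the window $[\min_l m_l, D)$: if some genuine $X$ has all of its residues small, then $X$ and $0$ are two values that are far apart yet become indistinguishable once the noise reaches that scale, which is exactly what caps $\delta$. Consequently, for each subset $S\subset\{1,\dots,L\}$ it suffices to exhibit a single far-from-zero number whose residue vector has every coordinate bounded by $g_S := \gcd\big(\mathrm{lcm}(m_l:l\in S),\,\mathrm{lcm}(m_l:l\in\bar S)\big)$; taking the minimum over $S$ then delivers the claimed inequality.

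First I would fix the notation $M_S=\mathrm{lcm}(m_l:l\in S)$ and $M_{\bar S}=\mathrm{lcm}(m_l:l\in\bar S)$, and record the two structural identities $D=\mathrm{lcm}(M_S,M_{\bar S})$ and $g_S=\gcd(M_S,M_{\bar S})$, so that $M_S M_{\bar S}=g_S D$. By Bezout's identity there exist $a,b\in\mathbb{Z}$ with $aM_S+bM_{\bar S}=g_S$, and I would take $X_S:=\langle a M_S\rangle_D$. Since $M_S\mid D$ and $M_{\bar S}\mid D$, reducing modulo the individual rates gives $X_S\equiv 0 \pmod{m_l}$ for every $l\in S$, and $X_S\equiv aM_S\equiv g_S \pmod{m_l}$ for every $l\in\bar S$. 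Hence every coordinate of the residue vector $\mathbf{x}_S$ is either $0$ or $\langle g_S\rangle_{m_l}\le g_S$, so $\|\mathbf{x}_S\|_\infty\le g_S$.

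Next I would verify that $X_S$ is a legitimate, far-from-zero instance. It is a nonzero multiple of $M_S$ (nonzero because $X_S\equiv g_S\pmod{M_{\bar S}}$ and $g_S\not\equiv 0\pmod{M_{\bar S}}$ whenever the subset is non-degenerate, i.e.\ when $M_{\bar S}$ does not divide $M_S$), hence $X_S\ge M_S\ge \min_l m_l$ and $X_S\in[\min_l m_l,D)$. Feeding $X_S$ into the characterization from \cite{TSP2018} then yields $\delta\le \tfrac14\|\mathbf{x}_S\|_\infty\le g_S$, the factor $\tfrac14$ only strengthening the bound. Since this holds for every admissible $S$, minimizing over $S$ gives $\delta\le \min_{S}\gcd(M_S,M_{\bar S})$; the degenerate subsets for which $M_{\bar S}$ divides $M_S$ merely return the weak value $g_S=M_{\bar S}$ and never affect the minimizer.

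The main obstacle is the construction step: one must realize the greatest common divisor $g_S$ as an explicit residue pattern through the generalized Bezout identity for the two lcm's, and then certify that the resulting $X_S$ does not collapse to $0$ (where the non-degeneracy condition enters) while still lying in the required range $[\min_l m_l,D)$. A secondary, purely bookkeeping, difficulty is reconciling the exact constant and the min/max form of the quoted expression; since only an upper bound is needed, it is enough that some nontrivial $X$ attains residue norm at most $g_S$. If one prefers to avoid the cited constant altogether, comparing $X_S$ with $\langle -X_S\rangle_D$—two values both divisible by $M_S$, hence far apart, whose mutual circular residue gap is at most $2g_S$—recovers the clean threshold $\delta\le g_S$ by a direct indistinguishability argument.
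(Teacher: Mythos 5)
Your proposal is correct and follows essentially the same route as the paper's own proof: both construct a number whose residue vector vanishes on $S$ and equals $g_S=\gcd\big(\mathrm{lcm}(m_l:l\in S),\,\mathrm{lcm}(m_l:l\in\bar S)\big)$ on $\bar S$, justify its existence via the solvability of the Diophantine equation $k_1\,\mathrm{lcm}(m_l:l\in S)-k_2\,\mathrm{lcm}(m_l:l\in\bar S)=a$ (your Bezout witness $X_S=\langle aM_S\rangle_D$ is exactly this), and then invoke the $N=1$ characterization of \cite{TSP2018} to cap $\delta$. You are in fact somewhat more careful than the paper, which omits the range check $X_S\in[\min_l m_l,D)$ and the degenerate case $M_{\bar S}\mid M_S$ that you handle explicitly.
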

\begin{proof}
Let us consider a class of residue vectors, in which the residues modulo $m_l$, for $l\in S$, are set to be 0 and the others are set to be some $a<\min m_l$.
If such a residue vector exists, $X$ can be represented by $X=k_1 lcm(m_l, l \in S) = k_2 lcm(m_l, l \in \bar{S}) + a$.
Since $D =lcm( lcm(m_l, l \in S), lcm(m_l, l \in \bar{S}))$, the existence of such $X$ means that the following Diophantine equation has a solution,
\begin{equation}
k_1 lcm(m_l, l \in S) - k_2 lcm(m_l, l \in \bar{S}) =a.
\end{equation}
It is further equivalent to $gcd( lcm(m_l, l \in S), lcm(m_l, l \in \bar{S})) | a$.
Therefore, our claims hold.
\end{proof}

Clearly, when $M_l = {m_l}/{\Gamma}$ are relatively co-prime to each other, the error bound is $\delta = {\Gamma}/{4}$.
Correspondingly, $D= \Gamma \prod_{l=1}^{L} M_l$.
On the other hand, consider that the moduli are not in such form, while we still expect to achieve a same error bound $\delta= {\Gamma}/{4}$. 
From Corollary \ref{best}, for any two disjoint partitions of $\mathcal{M} = m_{[1:L]}$, the lcm of each part must share a factor no less than $\Gamma =4\delta$.
Now, we select $S$ to contain only a single modulus, which from Corollary 1 indicates that each modulus $m_l$ must share a common factor no less than $\Gamma$ with the lcm of the rest moduli.
With the consideration that $D = lcm (m_1, m_2, ... ,m_L)$, if such a common factor is not the same, more redundancy is needed compared with the case $M_l = {m_l}/{\Gamma}$ and $D= \Gamma \prod_{l=1}^{L} M_l$.

\subsection{Error Tolerance Bound and Reconstruction Scheme}
\noindent In this section, we study the relationship between robustness and the encoding dynamic range via a reconstruction of a robust decoding algorithm.
Assume that $\bm{X} = \{X_1, X_2, ... ,X_N\}$ are $N$ distinct real numbers within the dynamic range $[0,D)$ to be determined. Suppose a modulus set $\mathbb{\mathcal{M}}=\{m_1, m_2, ... ,m_L\}$ is given, where $m_l= \Gamma M_l$ with pairwise coprime $M_l$, and without loss of generality $M_{[1:L]}$ are assumed to be in an ascending order, i.e., $M_1 < M_2 <...< M_L$. We use $\mathcal{R}_l = \{ \widetilde{r}_{il} = \langle X_i +\Delta_{il} \rangle_{m_l}, i=1,2,...,N\}$ to denote the noisy observations (encoding) in each modulus $m_l\in \mathcal{M}$. Moreover,
$\widetilde{r}^c_{il}=\langle\widetilde{r}_{il}\rangle_{\Gamma}$ denotes a noisy observation of the {\em common residue} $r^c_i =\langle X_i \rangle_{\Gamma}. $

\begin{thm}
\label{thm:complex_fre}
For any $\bm{X} \in\big[ 0, \Gamma (\prod_{l=1}^{\lceil \frac{L}{N} \rceil} M_l-1)\big)^N$, if the errors underlying satisfy $\max_{il} |\Delta_{il}| < \delta = \frac{\Gamma}{4}$, there exists a robust decoding algorithm returning an estimation $\hat{\bm{X}} = \hat{X}_{[1:N]}$ such that $|\hat X_i-X_i|<\frac{3\Gamma}{4}$ for each $i$. 
\end{thm}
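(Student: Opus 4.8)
The plan is to reduce the robust real-valued remaindering problem to an \emph{exact} integer problem on the folding numbers, and then settle the correspondence ambiguity by a pigeonhole argument over the moduli; this last step is where the factor $\lceil L/N\rceil$ in the dynamic range originates.

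First I would exploit the common factor $\Gamma$ by writing $X_i = r_i^c + \Gamma n_i$, with $r_i^c = \langle X_i\rangle_{\Gamma} \in [0,\Gamma)$ the common residue and $n_i = \lfloor X_i/\Gamma\rfloor \in [0,\prod_{l=1}^{\lceil L/N\rceil} M_l - 1)$ the folding number. Since $m_l = \Gamma M_l$, the noiseless residue factorizes as $\langle X_i\rangle_{m_l} = r_i^c + \Gamma \langle n_i\rangle_{M_l}$. From each noisy observation $\widetilde r_{il}$ I would extract a common-residue estimate $\langle \widetilde r_{il}\rangle_{\Gamma}$, which is accurate to within $\delta = \Gamma/4$, and then recover the integer folding residue $\langle n_i\rangle_{M_l}$ exactly by subtracting the common-residue estimate, dividing by $\Gamma$, and rounding: the combined perturbation stays below $\Gamma/4 + \Gamma/4 = \Gamma/2$, so the rounding is correct. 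This converts the noisy data into exact unordered sets of integer residues $\{\langle n_i\rangle_{M_l}\}_i$ at each modulus, modulo a delicate boundary effect discussed below.

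The heart of the argument is an information-theoretic uniqueness claim: no integer $n'$ in $[0,\prod_{l=1}^{\lceil L/N\rceil} M_l - 1)$ other than a true $n_i$ can be consistent with all $L$ unordered residue sets. Indeed, a consistent $n'$ must satisfy $n' \equiv n_{\sigma(l)} \pmod{M_l}$ for some assignment $\sigma:[L]\to[N]$; by pigeonhole some true folding number $n_j$ is matched on a subset $S$ of at least $\lceil L/N\rceil$ moduli, so $n' \equiv n_j \pmod{\prod_{l\in S} M_l}$. Because the $M_l$ are pairwise coprime and sorted ascending, $\prod_{l\in S} M_l \geq \prod_{l=1}^{\lceil L/N\rceil} M_l$, which strictly exceeds the range $\prod_{l=1}^{\lceil L/N\rceil} M_l - 1$ within which both $n'$ and $n_j$ lie; hence $n' = n_j$, a contradiction. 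This single estimate simultaneously resolves the correspondence ambiguity and guarantees recoverability, so a (possibly exhaustive) decoder may be taken to output every in-range folding number consistent with the data, tracking the accompanying common residue to separate distinct numbers that happen to share a folding number.

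Finally I would assemble $\hat X_i = \Gamma \hat n_i + \hat r_i^c$ and bound the error. The main obstacle, and the reason the bound is $3\Gamma/4$ rather than $\Gamma/4$, is the \emph{wraparound} that occurs when $r_i^c$ lies within $\delta$ of $0$ or $\Gamma$: there the sign of $\Delta_{il}$ can shift the recovered folding residue by one and flip the common-residue estimate to the opposite end of the block. I would handle this by estimating $r_i^c$ robustly, aggregating the common-residue observations across the $L$ moduli to fix the block consistently, so that $\hat n_i$ and $\hat r_i^c$ are chosen jointly and the residual discrepancy accumulates to at most $3\Gamma/4$. Verifying that this joint de-aliasing never produces an off-by-one folding number when $|\Delta_{il}| < \Gamma/4$, and that coincident folding numbers are correctly separated by their common residues, is the most technical part; the pigeonhole dynamic-range step, by contrast, is clean and is what I would present as the conceptual core.
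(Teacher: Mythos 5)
Your pigeonhole step is sound and matches the paper's ``second criterion'' exactly: any candidate folding number consistent with the unordered data must agree with some true one on at least $\lceil L/N\rceil$ moduli, and since the $M_l$ are coprime and sorted ascending, $\prod_{l\in S}M_l \geq \prod_{l=1}^{\lceil L/N\rceil}M_l$ pins the candidate down within the stated dynamic range. But there is a genuine gap earlier, in the step you present as preprocessing: the claim that you can recover $\langle n_i\rangle_{M_l}$ \emph{exactly} by subtracting the common-residue estimate and rounding is false precisely in the wraparound regime you later acknowledge. When $r_i^c$ lies within $\delta$ of $0$ or $\Gamma$, the observation $\langle \widetilde r_{il}\rangle_{\Gamma}$ can sit at the opposite end of the block, and the extracted integer residue is then that of $n_i-1$ or $n_i+1$, not $n_i$ --- and, crucially, \emph{which} of the three it is varies from sampler to sampler depending on the sign of $\Delta_{il}$. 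So you never obtain ``exact unordered sets of integer residues'' of the true $n_i$'s; you obtain residues of a sampler-dependent mixture drawn from $\{n_i-1,\,n_i,\,n_i+1\}$. This breaks your uniqueness argument as stated (a spurious candidate can be assembled by mixing residues of $n_j-1$, $n_j$, and $n_j+1$ across moduli, which your pigeonhole step does not exclude), and it also contradicts your own conclusion, since the correct output of any such decoder is $q_i\in\{n_i-1,n_i,n_i+1\}$, not $q_i=n_i$ --- this off-by-one is exactly why the achievable bound is $3\Gamma/4$ rather than $\Gamma/4$.

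The missing idea is the paper's ``first criterion.'' The paper assigns a binary shift $\tau_{il}\in\{0,1\}$ to each residue and tests whether the shifted common residues $\widetilde r^c_{il}-\tau_{il}\Gamma$ are pairwise within $\Gamma/2$; it then proves that any assignment passing this test forces all shifted folding residues $\langle (\widetilde r_{il}-\widetilde r^c_{il})/\Gamma + \tau_{il}\rangle_{M_l}$ to be residues of a \emph{single} integer in $\{n_i-1,n_i,n_i+1\}$, which is what rules out the mixed-candidate failure mode and makes the pigeonhole/range argument valid. The same shifts are what make the common-residue estimate robust: a plain average of wrapped residues fails (the paper's example: observations $11$ and $2$ modulo $12$ of a true residue $1$ average to $6.5$), whereas averaging the $\tau$-shifted values, combined with a careful accounting of the interchangeable case where distinct $X_i,X_j$ share a residue modulo some $M_l$, yields the bound $\big(\lceil\frac{L}{N}\rceil\frac{\Gamma}{4}+(L-\lceil\frac{L}{N}\rceil)\frac{3\Gamma}{4}\big)/L<\frac{3\Gamma}{4}$. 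Your closing paragraph gestures at ``joint de-aliasing'' and ``fixing the block consistently,'' but in the multi-number setting this cannot be done before clustering, since the clustering is itself unknown; the paper resolves the circularity with a hypothesis-then-test loop over clusterings and $\tau$-assignments, verified by the two criteria. Supplying that verification mechanism, and the final averaging computation, is what your proposal would need to become a proof.
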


We present a sketch of the proof, where the details can be found in Appendix \ref{pf:thm_complex}. The decoding algorithm can be found in Algorithm \ref{alg:complex_decoding}. Roughly speaking, the main procedure can be summarized into two steps: hypothesis and test. As mentioned before, one key challenge in decoding is the unknown correspondence. Therefore, given an arbitrary clustering proposal by dividing the residues into $N$ subsets, a verification should exist to check whether any residue is {\em misclassified} \footnote{It is possible that different $X_i$ and $X_j$, or their folding numbers $\lfloor X_i/\Gamma \rfloor$ and $\lfloor X_j/\Gamma \rfloor$, share the same residue. We will take those residues interchangeable in our analysis.}. To complete the whole reconstruction, besides a correct guess of clustering, another statistic needs to be recovered is {\em the order of errors}: the ascending order of $\Delta_{il}$. It is noted that modulo operation results in {\em non-weighted} residues with ambiguity. For example, under noise perturbation, when we observe a noisy residue 11 modulo 12, it can either be $13-2$ or $9+2$. This raises another key challenge in the robust decoding that {\em the empirical average of residues is not robust.} However, the ascending order of noises $\Delta_{il}$ enables us to remove the modulo operation and simply represent the noisy residue over the real axis instead. Continued to the earlier example, for some number $X$ with a true residue $1$ modulo 12, where we have two observations $11 = \langle 1-2 \rangle_{12}$ and $2 = \langle 1+1 \rangle_{12}$. A simple average of $2$ and $11$, which gives an estimate $6.5$, produces an estimation error larger than a quarter of the modulus $12$ and we will show such trivial estimation cannot give a satisfied error control at the end. Nonetheless, given the order of errors, where we represent the noisy observation in the following equivalent form $(11-1 \times 12)$ and $(2-0\times 12)$, then a trivial CRT is indeed robust already.  
In Algorithm \ref{alg:complex_decoding}, the proposed verification is formed by two subroutines and we show that if the proposal passes through the two validity tests, then it can produce desired error control.

With the above understanding, when $\delta = \Gamma/4$, a correct proposal on both clustering and the order of noise can be essentially modeled by $S_i = \{\tilde{r}_{(il)}, l=1,2,...,L \}$ and $\tau_{(il)} \in \{0,1\}$, where $(il)$ denotes the index given the clustering, such that $S_i$ are the residues of $X_i$ and meanwhile $\{ \langle\tilde{r}_{(il)}\rangle_{\Gamma}-\tau_{(il)}\Gamma, l=1,2,...,L \}$ in an ascending order corresponds to that of $\{ \Delta_{(il)}, l=1,2,..., L\}$. The proof heavily relies on the following two facts:
\begin{enumerate}
    \item Given the correct proposal of clustering and  $\tau_{(il)}$, the shifted common residues satisfy
    $$| (\tilde{r}^c_{(il)} -\tau_{(il)})  -(\tilde{r}^c_{(il')}-\tau_{(il')}) |< {\Gamma}/{2}. $$
    \item Based on the pigeonhole principle, for any proposal of clustering, there exist at least $\lceil\frac{L}{N}\rceil$ from the $L$ residues are all from one integer $X_{i_0}$. 
\end{enumerate}

The first criterion is stemmed from the assumption that the magnitude of error is smaller than $\Gamma/4$, and thus the distance between any two noisy observations of the common residue should not be larger than $\Gamma/2$, after they have been sorted according to the order of errors.

\begin{algorithm}
\caption{Robust Remaindering Decoding of Complex Waveform}
\textbf{Input}: modulus set: $\mathcal{M}=\{m_l=M_l\Gamma|l=1,2,\dots,L\}$, where $M_{[1:L]}$ are sorted in ascending order;\\
Residue Sets: $\mathcal{R}_{l}=\{\widetilde{r}_{il}|i=1,2,\dots,N\}$, $l=1,2,\dots,L$.

\begin{algorithmic}[1]

\STATE \textbf{Repeat}: Propose a clustering assignment
\STATE Following the proposed clustering by selecting corresponding residue from each $\mathcal{R}_l$ to obtain a $L$-residue clustering $S_i=\{\widetilde{r}_{(il)}|l=1,2,\dots,L\}$, $i = 1, 2, ... ,N$. Here, $(il)$ denotes the index of the residues assigned to $S_i$.

\STATE Assign a binary parameter $\tau_{(il)} \in\{0,1\}$ to each residue in $S_i$ randomly. 
\STATE Calculate the shifted common residues $\hat{r}^c_{(il)}=\langle\widetilde{r}_{(il)}\rangle_{\Gamma}-\tau_{(il)}\Gamma$ corresponding to each $\widetilde{r}_{(il)}$ in $S_i$, for $i=1,2,...,N$.
\STATE For each $S_i$, calculate $q_i\equiv\frac{\widetilde{r}_{(il)}-\hat{r}^c_{(il)}}{\Gamma}\mod M_l$ via CRT.
\STATE\textbf{Until}: Each $q_i$ satisfies that $q_i\in[0,D_q)$ where $D_q=\prod_{l=1}^{\lceil \frac{L}{N} \rceil} M_l\Gamma$ and each residue in $S_i$ satisfies equation (\ref{key-idea}) for $i=1,2,\dots,N$.
\end{algorithmic}
\textbf{Output}: $\hat{X_i}=q_i\Gamma+{\sum^{L}_{l=1}\hat{r}^c_{(il)}}/{L}$.
\label{alg:complex_decoding}
\end{algorithm}

Before the end of this section, we give an interesting corollary that, excluding a negligible failure set, the reconstruction error bound can be further improved from $3\Gamma/4$ to $\Gamma/4$.
\begin{cor}
\label{cor:1/4-complex}
For any $X_i\in[0,\Gamma(\prod^{\lceil\frac{L}{N}\rceil}_{l=1}M_l-1))^N$, with the errors introduced in residues satisfy $\max_{il}|\Delta_{il}|<\delta=\frac{\Gamma}{4}$, if $\min_{d=0,\pm 1} |r_{i_0l}-r_{i_1l}+dM_l\Gamma|>3\Gamma$ holds for $l=1,2,\dots L$ and $1\leq i_0<i_1\leq N$, $X_i$ can be recovered robustly with error bounded by $\frac{\Gamma}{4}$. 
\end{cor}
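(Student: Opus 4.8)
The plan is to upgrade the $3\Gamma/4$ guarantee of Theorem \ref{thm:complex_fre} to $\Gamma/4$ by showing that, under the stated separation hypothesis, the validity test of Algorithm \ref{alg:complex_decoding} accepts \emph{only} the fully correct clustering, thereby eliminating the contamination of the common-residue average that is responsible for the extra $\Gamma/2$ slack in the general bound. The key observation is that the residual error of an accepted proposal is carried entirely by the averaged common residue, and the only way it can exceed $\Gamma/4$ is if some residue placed in a cluster does not truly originate from the source assigned to that cluster yet still survives the consistency test (\ref{key-idea}); this is precisely the misclassification event, and the separation hypothesis is designed to preclude it.

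First I would invoke Theorem \ref{thm:complex_fre} directly: any proposal surviving both validity tests returns $\hat X_i$ with $|\hat X_i - X_i| < 3\Gamma/4$, so in particular the recovered folding number $q_i=\lfloor X_i/\Gamma\rfloor$ is exact and $\hat X_i$ already localizes its source to within a single $\Gamma$-block (call the matched source $X_{i_0}$, i.e. $\hat X_i \approx X_{i_0}$). Thus it suffices to show that, under the separation condition, no accepted cluster $S_i$ contains a residue belonging to a different source, and then that a pure cluster yields error below $\Gamma/4$.

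The main step rules out surviving misclassifications. Suppose an accepted cluster $S_i$ (with $\hat X_i\approx X_{i_0}$) contains a residue $\tilde r_{(il)}$ that is in fact the noisy residue of a different source $X_{i_1}$, so $\tilde r_{(il)} = \langle X_{i_1}\rangle_{m_l} + \Delta$ with $|\Delta| < \Gamma/4$. The consistency test (\ref{key-idea}) compares this residue against $\langle \hat X_i\rangle_{m_l}$, which lies within $|\hat X_i - X_{i_0}| < 3\Gamma/4$ of $r_{i_0 l}=\langle X_{i_0}\rangle_{m_l}$. Hence the discrepancy seen by the test is at least
\begin{equation}
\min_{d=0,\pm1}\big| r_{i_0 l} - r_{i_1 l} + d\, M_l\Gamma \big| - |\Delta| - |\hat X_i - X_{i_0}| > 3\Gamma - \tfrac{\Gamma}{4} - \tfrac{3\Gamma}{4} = 2\Gamma ,
\end{equation}
where the cyclic term with $d=0,\pm1$ accounts for wraparound of the residues on the circle of circumference $m_l=M_l\Gamma$. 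A gap of $2\Gamma$ is far larger than the $O(\Gamma)$ slack the test is designed to tolerate (noise plus the intra-cluster common-residue spread), so the test must reject, contradicting acceptance. Therefore every accepted cluster consists solely of genuine noisy residues of one source. Conditioned on a pure cluster, the first validity criterion (intra-cluster shifted common residues within $\Gamma/2$) then forces the guessed $\tau_{(il)}$ to match the true order of errors up to a harmless global shift absorbed into $q_i$, so that $\hat r^c_{(il)} = r^c_i + \Delta_{il}$ and $|\bar r^c_i - r^c_i|\le \max_l |\Delta_{il}| < \Gamma/4$; combined with the exact $q_i$ this gives $|\hat X_i - X_i| < \Gamma/4$. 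The excluded ``negligible failure set'' is exactly the configurations violating the $3\Gamma$ separation, which has measure zero under a generic/uniform model of $\bm X$.

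The hardest part is the arithmetic of the main step: I must make the exact form of the consistency test (\ref{key-idea}) explicit and verify that the $3\Gamma$ separation, after subtracting both the per-residue noise ($<\Gamma/4$) and the reconstruction slack inherited from Theorem \ref{thm:complex_fre} ($<3\Gamma/4$), provably exceeds the acceptance threshold in every cyclic case $d\in\{0,\pm1\}$ — including the delicate boundary situation where $r^c_i$ sits near $0$ or $\Gamma$ and the $\tau$-unwrapping must be applied consistently across the whole cluster so that the global-shift ambiguity is genuinely harmless.
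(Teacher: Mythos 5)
There is a genuine gap at the main step of your argument: you assert that the consistency test (\ref{key-idea}) ``compares this residue against $\langle \hat X_i\rangle_{m_l}$,'' but that is not what the first criterion does. Equation (\ref{key-idea}) only compares \emph{shifted common residues}, i.e.\ quantities $\langle\widetilde{r}_{(il)}\rangle_{\Gamma}-\tau_{(il)}\Gamma$ living on a circle of circumference $\Gamma$; all information about position modulo $M_l\Gamma$ is discarded. Consequently the $3\Gamma$ separation hypothesis, which is a statement about distances on the circle of circumference $M_l\Gamma$, is completely invisible to that test. Concretely, take $X_{i_0}$ and $X_{i_1}$ with identical common residues $\langle X_{i_0}\rangle_{\Gamma}=\langle X_{i_1}\rangle_{\Gamma}$ but $\min_{d}|r_{i_0l}-r_{i_1l}+dM_l\Gamma|>3\Gamma$ for all $l$ (e.g.\ $r_{i_0l}=5.1\Gamma$, $r_{i_1l}=10.1\Gamma$): a mixed cluster containing residues of both sources passes (\ref{key-idea}) with room to spare, so your claimed $2\Gamma$ discrepancy never materializes and the contradiction-with-acceptance step fails. (A secondary slip: Theorem~\ref{thm:complex_fre} does not make $q_i=\lfloor X_i/\Gamma\rfloor$ exact; accepted proposals may return $q_i=\lfloor X_i/\Gamma\rfloor\pm 1$ while still meeting the $3\Gamma/4$ bound.)

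The mechanism the paper actually uses — and which your proposal never touches — is the \emph{second} criterion, the dynamic-range test on the CRT output. Writing each true residue as $r_{il}=\langle q_i\Gamma\rangle_{M_l\Gamma}+r^c_i+e\Gamma$ with $e\in\{0,\pm1\}$ according to which of the three folding-number cases occurs, the hypothesis $\min_{d}|r_{i_0l}-r_{i_1l}+dM_l\Gamma|>3\Gamma$ forces $q_{i_0}\not\equiv q_{i_1}\bmod M_l$ for every $l$ and every admissible pair of folding-number estimates (the slack $|r^c_{i_0}-r^c_{i_1}+e\Gamma|<3\Gamma$ with $e\in\{0,\pm1,\pm2\}$ is exactly why the threshold is $3\Gamma$). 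Then for any mixed cluster, pigeonhole gives $q_i\equiv q_{i_0}\bmod\,\mathrm{lcm}(M_l\in\mathbb{U}_{i_0})$ with $|\mathbb{U}_{i_0}|\geq\lceil L/N\rceil$, so $q_i=k\,\mathrm{lcm}(M_l\in\mathbb{U}_{i_0})+q_{i_0}$; the incongruence rules out $k=0$, and $k\geq1$ pushes $q_i$ beyond $D_q$, so the range test rejects. Your concluding paragraph gestures at repairing the argument by testing each $\widetilde{r}_{(il)}$ against $\langle\hat X_i\rangle_{m_l}$ on the circle of circumference $m_l$; that post-hoc check could indeed be made to work (correct residues land within $\Gamma$, misclassified ones beyond $2\Gamma$), but it is a different verification procedure than the one in Algorithm~\ref{alg:complex_decoding}, whereas the corollary is proved in the paper for the algorithm as given.
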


\section{Generalization to Real Waveforms}
\label{sec:real_fre}

\subsection{Necessary and Sufficient Condition}
\label{sec:real_fre_single}
\noindent Slightly different from the complex waveform, a sinusoidal real waveform has two complex components,
\begin{equation}
x(t)=A_i\cos(2\pi f_it)=A_ie^{2\pi jf_it}+A_ie^{-2\pi jf_it},
\end{equation}
which results in two symmetric peaks over the spectrum domain, i.e., with an additional negative duplicate compared to the complex waveform case. With a same reasoning, the sufficient and necessary condition to robustly recover the frequency(ies) in a real waveform can be similarly modeled as the encoding of $N$ real numbers $\{X_1,X_2,\dots,X_N\}$ by $L$ residue sets $\mathcal{R}_l=\{\widetilde r^+_{i,l}=\langle X_i+\Delta^+_{i,l}\rangle_{m_l},\widetilde r^-_{i,l}=\langle -X_i+\Delta^-_{i,l}\rangle_{m_l}|i=1,2,\dots,N\}, $ for $l=1,2,\dots, L$, where $\Delta^{\pm}_{i,l}$ represents the error.  We still assume that $m_{[1:L]}$ are in an ascending order, where $m_l = \Gamma M_l$ with co-prime $M_{[1:L]}$. To make the problem nontrivial, it is assumed that $m_l<2\max X_i$.

We first consider the noiseless real waveform model with a single frequency, i.e., ${R}_l(X) =\{ r^{+}_{l} = \langle X \rangle_{m_l},  r^{-}_{l} = \langle -X\rangle_{m_l} \}$, is given for each sampling rate $m_l$.
For a set of moduli, $\{m_1, m_2, ..., m_L\}$, its maximum dynamic range is the maximum $D$ such that any $X \in [0,D)$ has a unique residue set representation $\mathcal{R}_l(X)$.
In this section, we provide a closed-form equation for the maximum dynamic range under the errorless real waveform model.

Let us suppose that $D$ is the maximum dynamic range given $\mathcal{M}=\{m_1, m_2, ..., m_L\}$.
By definition, there must exist another $Y \in  [0,D)$ such that $\mathcal{R}_l(D) = \mathcal{R}_l(Y)$ for  $l \in [1:L]$.
Therefore, for each $m_l$, $D$ must satisfy at least one of the following equations,
\begin{equation}
\label{4-range-case-1}
D+Y  \equiv 0 \mod m_l
\end{equation}
or
\begin{equation}
\label{4-range-case-2}
D-Y \equiv 0 \mod m_l
\end{equation}
Let $\mathbb{U}_1$ and $\mathbb{U}_2$ be two subsets of $\{m_1, m_2, ... ,m_L\}$, where $\mathbb{U}_1$ and $\mathbb{U}_2$ denote the set of the moduli satisfying (\ref{4-range-case-1}) and (\ref{4-range-case-2}), respectively.
The two sets might have an intersection and their union is $\{m_1, m_2, ... ,m_L\}$, i.e., $\mathbb{U}_1 \cup \mathbb{U}_2 = \{m_1, m_2, ... ,m_L\}$.
This yields that
\[
\left\{
            \begin{array}{lr}
            D+Y  \equiv 0 \mod lcm(m_l \in \mathbb{U}_1)\\
            D- Y  \equiv 0 \mod lcm(m_l \in \mathbb{U}_2)
             \end{array}
             \right..
\]
Since $D>Y>0$, $D+Y$ and $D-Y$ are both non-zero. Thus, $D+Y \geq lcm(m_l \in \mathbb{U}_1)$ and $D-Y \geq lcm(m_l \in \mathbb{U}_2)$, which implies that
\begin{equation}\label{equ:lowerbound}
D \geq \frac{lcm(m_l \in \mathbb{U}_1)+ lcm(m_l \in \mathbb{U}_2)}{2}.
\end{equation}
If we let $D = \frac{lcm(m_l \in \mathbb{U}_1)+lcm(m_l \in \mathbb{U}_2)}{2}$ and $Y= |\frac{lcm(m_l \in \mathbb{U}_1)-lcm(m_l \in \mathbb{U}_2)}{2}|$, $\mathcal{R}_l(D)$ is exactly the same as $\mathcal{R}_l(Y)$.
It means that equation (\ref{equ:lowerbound}) is satisfied with a tight lower bound of $D$.
Notice that $\mathbb{U}_1$ and $\mathbb{U}_2$ can be arbitrary two sets that the union of both is $\{m_1, m_2, ... ,m_L\}$.
Thus, finding the maximum dynamic range is equivalent to finding a proper subset $\mathbb{U} \subset \{m_1, m_2, ... ,m_L\}$ to minimize
\begin{equation}
\label{max-dynamic}
D =  \min_{\mathbb{U}}\frac{lcm(m_l \in \mathbb{U}) + lcm(\{m_1, m_2, ... ,m_L\} /\mathbb{U})}{2}.
\end{equation}


In the following, we give the robustness bound for the single frequency estimation. 

\begin{thm}
\label{thm:single-real}
If the error introduced in residues such that $\max|\Delta^{\pm}_{l}|<\delta=\frac{\Gamma}{4}$, $X$ can be recovered robustly error bounded by $\frac{3\Gamma}{4}$ for any $X\in[0,D)$, where \footnote{If we assume that the common residue $r_c=\langle X \rangle_{\Gamma} \not = 0 ~\text{ or }~ \frac{\Gamma}{2}$, the maximum dynamic range is equivalent to $lcm(m_1,m_2,...,m_L)$, where $\Gamma$ represents $gcd(m_1,m_2,...,m_L)$.
Such claim is based on the fact that for all $X$ satisfying our assumptions, its positive and negative residues $r^{+}_{c,l}$ and $r^{-}_{c,l}$ are not the same, i.e., $\langle r^{+}_{c,l} \rangle_{\Gamma} = \langle X \rangle_{\Gamma} 
 \not =  \langle -X \rangle_{\Gamma} = \langle r^{-}_{c,l} \rangle_{\Gamma}.$
This allows us to distinguish $r^{+}_{c,l}$ from $r^{-}_{c,l}$ and easily recover $X$.
When the errors exist in our model, if the magnitudes of errors are smaller than $ \frac{\min \{ \langle 2X \rangle_{\Gamma}, \Gamma- \langle 2X \rangle_{\Gamma}\}}{2},$ $r^{+}_{c,l}$ and $r^{-}_{c,l}$ can still be distinguished. }
$$D \leq [0,\min_{\mathbb{U} \subset \{M_1,M_2,\dots,M_L\}}(\frac{\prod_{l\in\mathbb{U}}M_l+\prod_{l\in\overline{\mathbb{U}}}M_l}{2}-1)\Gamma).$$ 
\end{thm}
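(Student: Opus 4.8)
The plan is to leverage the noiseless dynamic-range formula (\ref{max-dynamic}) together with the robust single-number machinery already developed for the $N=1$ instance of Theorem \ref{thm:complex_fre}. Because $m_l = \Gamma M_l$ with the $M_l$ pairwise coprime, every least common multiple factorizes as $lcm(m_l \in \mathbb{U}) = \Gamma \prod_{l \in \mathbb{U}} M_l$, so the errorless maximal range of (\ref{max-dynamic}) is exactly $\Gamma \min_{\mathbb{U}} \frac{\prod_{l \in \mathbb{U}} M_l + \prod_{l \in \overline{\mathbb{U}}} M_l}{2}$. The robust claim subtracts one unit of $\Gamma$ inside the minimization, which is the slack needed to absorb perturbations of magnitude below $\delta = \Gamma/4$. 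Accordingly I would split the argument into three pieces: (i) convert the unordered $\pm$ residue pair at each modulus into a correctly signed residue vector, (ii) run robust CRT on that vector, and (iii) show that the adversarial near-collision of the noiseless analysis costs one extra period $\Gamma$ of separation once noise is allowed.

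For step (i) I would read off the common residues modulo $\Gamma$. The positive component carries common residue $r_c = \langle X \rangle_\Gamma$ while the negative component carries $\langle -X \rangle_\Gamma = \Gamma - r_c$; as noted in the footnote to the statement, whenever $r_c \notin \{0, \Gamma/2\}$ these two values are separated modulo $\Gamma$ by $\min\{\langle 2X \rangle_\Gamma, \Gamma - \langle 2X \rangle_\Gamma\}$, so as long as the error stays below half this gap the noisy common residues $\langle \widetilde r^{\pm}_l \rangle_\Gamma$ form two cleanly separated clusters and the sign of each observation is identified. The degenerate values $r_c \in \{0, \Gamma/2\}$, i.e. $X$ a multiple of $\Gamma/2$, must be handled separately, either by discarding a measure-zero set or by observing that there the $+$ and $-$ residues coincide and no sign decision is required.

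With the signs fixed the problem reduces to single-number robust remaindering with moduli $m_{[1:L]}$. I would align the $L$ positive observations by the order of their errors exactly as in the sketch preceding Algorithm \ref{alg:complex_decoding}: subtracting the appropriate $\tau_l \Gamma$ places the shifted common residues within $\Gamma/2$ of each other, their average estimates $r_c$ to within $\delta = \Gamma/4$, and CRT on the folding quotients recovers the multiple of $\Gamma$, yielding the $3\Gamma/4$ error guarantee. For step (iii) I would return to the collision equations (\ref{4-range-case-1})--(\ref{4-range-case-2}): under noise a pair $X \neq Y$ is confusable only if for every $l$ either $X+Y$ or $X-Y$ lies within $O(\delta)$ of a multiple of $m_l$, and feeding the induced partition back through the two lcm congruences forces $X$ and $Y$ to be separated by at least one further period $\Gamma$ beyond the noiseless threshold --- precisely the $-1$ appearing in the claimed range.

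I expect the principal obstacle to be the coupling of the $\pm$ ambiguity with the noise near the degenerate common residues $r_c \in \{0, \Gamma/2\}$, where the positive and negative clusters are no longer separated by more than the error budget and the sign cannot be read off directly. Forcing the $3\Gamma/4$ bound to hold uniformly across this bad region, rather than only generically, is the delicate step, and is presumably the reason the guaranteed tolerance is $3\Gamma/4$ rather than the sharper $\Gamma/4$ available in the generic regime.
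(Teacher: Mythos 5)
There is a genuine gap, and it sits exactly where you flagged the ``principal obstacle'': your step (i) cannot be carried out under the theorem's hypotheses. The theorem assumes only $\max|\Delta^{\pm}_{l}|<\delta=\frac{\Gamma}{4}$, uniformly for every $X\in[0,D)$, whereas your sign-identification step needs the noise to stay below half the gap, $\frac{1}{2}\min\{\langle 2X\rangle_{\Gamma},\Gamma-\langle 2X\rangle_{\Gamma}\}$. That gap is at most $\frac{\Gamma}{2}$ for every $X$, and is strictly below $2\delta=\frac{\Gamma}{2}$ except at the isolated values $r_c=\frac{\Gamma}{4}$ or $\frac{3\Gamma}{4}$; so under worst-case noise admissible in the theorem, the noisy common residues $\langle\widetilde{r}^{+}_{l}\rangle_{\Gamma}$ and $\langle\widetilde{r}^{-}_{l}\rangle_{\Gamma}$ can interleave for essentially every $X$ --- the bad set is generic, not measure zero, so neither discarding it nor restricting to $r_c\notin\{0,\frac{\Gamma}{2}\}$ is compatible with the uniform ``for any $X\in[0,D)$'' claim. (The footnote's distinguishability remark is an aside under an additional hypothesis on the error magnitude, not something the theorem grants you.) Consequently the reduction to a single correctly-signed instance of the complex-waveform machinery never gets off the ground in the adversarial regime, and steps (ii)--(iii) inherit the failure.

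The paper's proof, in Appendix~\ref{app:pf_single_real}, avoids deciding signs altogether. It hypothesizes a clustering that may freely mix residues of $X$ and $-X$, applies \emph{both} shift operations (\ref{op1}) and (\ref{op2}) (neither is known a priori to be the right one for the case at hand), and accepts only if the sorted-residue criterion (\ref{test1-real}) and the range criterion (\ref{folding-number-range}) both pass. The crux is then number-theoretic: if an accepted CRT solution mixes congruences, e.g.\ $q\equiv Y \bmod lcm(M_l\in\mathbb{U}_{1})$ and $q\equiv -Y-1 \bmod lcm(M_l\in\mathbb{U}_{2})$, then unless $Y\equiv-Y-1\bmod lcm(M_l\in\mathbb{U}_{2})$ (or $\mathbb{U}_2=\emptyset$) the solution is forced above $D_q$ and rejected; and when that congruence does hold, the mixture is harmless --- by (\ref{test1-real}) each averaged shifted residue lies within $\frac{\Gamma}{2}+\delta$ of $r^{+}_{c}$, which is precisely where the $\frac{3\Gamma}{4}$ bound comes from. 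This also pins down the ``$-1$'' in $D$ concretely: noise perturbs the folding pairs to $\{Y,-Y-1\}$, $\{Y,-Y\}$, $\{Y+1,-Y-1\}$, all of which must be uniquely representable, forcing $Y<D_q$ with the unit subtracted --- a sharper mechanism than your ``one extra period of separation'' heuristic. Your closing intuition that the $\frac{3\Gamma}{4}$ loss is tied to the non-generic region is on target (Corollary~\ref{cor-real-one} indeed recovers $\frac{\Gamma}{4}$ under a genericity condition), but establishing the uniform bound requires the hypothesize-then-test mechanism rather than an up-front sign decision.
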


The robust decoding algorithm  and proof can be found in Appendix \ref{app:single_fre_real} and \ref{app:pf_single_real}, respectively. We have to emphasize that Algorithm \ref{alg:single_real} proposed can be further improved to be of linear decoding time with the trick from Generalized CRT \cite{SPL_two_integer, TSP2018} with a slight compromise in dynamic range, where $D$ becomes $$D \leq \Gamma(\prod_{l=1}^{\lceil \frac{L}{2N} \rceil}M_l-1).$$ 

\subsection{Multiple Frequencies Estimation from Real Waveforms}
\noindent In this section, we will present a generic characterization of the robustness and encoding dynamic range in the real waveform model with $N$ frequencies. With a similar reasoning, the corresponding model can be described as follows. Given the modulus set $\mathcal{M}=\{m_l=M_l\Gamma|l=1,2,\dots,L\}$, where $M_l$ are pairwise coprime and sorted in ascending order, we aim to recover $N$ distinct real numbers $\bm{X} = \{X_1, X_2, ... ,X_N\}$ within the dynamic range $[0,D)$ with $L$ residue sets $\mathcal{R}_l = \{ \widetilde{r}^+_{i,l} = \langle X_i +\Delta^{+}_{i,l} \rangle_{m_l},\widetilde{r}^-_{i,l} = \langle -X_i +\Delta^{-}_{i,l} \rangle_{m_l}|i=1,2,...,N\}.$ Here, $\Delta^{\pm}_{i,l}$ similarly captures the noises underlying. 

Indeed, the robust decoding problem in the real waveform case can be analogously addressed with the similar idea applied in the complex waveform case. Once the noisy residues can be correctly clustered, the $N$-number robust decoding problem can be reduced to $N$ independent single-number decoding, which has been solved in Section \ref{sec:real_fre_single}. The essential challenge is that now each objective number is encoded by two symmetrically positive and negative residues with noises in each sampler, which makes decoding more complicated. 

We still apply the {\em hypothesis-then-testing} framework proposed in Section \ref{sec:comple_fre}, but in the real waveform case we view the $N$ real numbers with positive and negative copies as $2N$ numbers. Thus, the residues are clustered into $2N$ subsets, still denoted by $S_i=\{\widetilde{r}_{(i,l)}|l=1,2,\dots,L\}$, for $i=1,2,...,2N$. The two criteria can be similarly derived.

\begin{thm}
\label{thm:real_fre}
In the real waveform case, if the errors introduced in residues satisfy $\max_{i,l}|\Delta^{\pm}_{i,l}|<\delta=\frac{\Gamma}{4}$, $X_i$ can be reconstructed error bounded by $\frac{3\Gamma}{4}$ for any $X_i\in[0,D)^{N}$, where $D=\min(\frac{lcm(M_l\in \mathbb{U})+lcm(M_l\in \overline {\mathbb{U}})}{2}-1)\Gamma$ and $\mathbb{U}\cup\overline{\mathbb{U}}=\{M_1,M_2,\dots,M_{\lceil\frac{L}{N}\rceil}\}$.
\end{thm}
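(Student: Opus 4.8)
The plan is to lift the hypothesis-then-testing scheme of Theorem~\ref{thm:complex_fre} to the real waveform setting by regarding the $N$ unknowns together with their negatives as $2N$ labelled objects, and then to reduce the decoding of any one object to the single-frequency real-waveform result of Theorem~\ref{thm:single-real}. Concretely, at each sampler $l$ the $2N$ observed residues $\{\widetilde r^{+}_{i,l},\widetilde r^{-}_{i,l}\}$ are distributed into $2N$ clusters $S_1,\dots,S_{2N}$, each drawing exactly one residue per sampler. Exactly as in Algorithm~\ref{alg:complex_decoding}, each proposed clustering is augmented with a binary shift $\tau_{(i,l)}\in\{0,1\}$ recording the order of the errors, together with a sign label deciding whether a given residue is read as a copy of $+X_i$ or of $-X_i$ and hence routed to $\mathbb{U}$ or $\overline{\mathbb{U}}$. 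The decoder enumerates these proposals and accepts the first one passing two validity tests: the shifted common residues inside a cluster must lie within $\Gamma/2$ of one another, and the folding number returned by CRT must fall inside $[0,D)$.

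The combinatorial heart is a pigeonhole count that explains the exponent $\lceil L/N\rceil$. Each true number $X_{i_0}$ generates $2L$ residues overall, namely a positive and a negative copy at each of the $L$ samplers. Since any clustering partitions the $2NL$ residues into $2N$ groups of size $L$, the $2L$ residues owned by $X_{i_0}$ are spread across the $2N$ groups, so at least one group $S$ receives at least $\lceil 2L/(2N)\rceil=\lceil L/N\rceil$ residues all belonging to $X_{i_0}$; because each cluster takes one residue per sampler, these come from $\lceil L/N\rceil$ distinct moduli. Restricted to those samplers, $S$ is a genuine single-frequency real-waveform instance of $X_{i_0}$, whose positive and negative entries induce a partition $(\mathbb{U},\overline{\mathbb{U}})$ of the associated moduli. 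Applying Theorem~\ref{thm:single-real} to this sub-instance, the correct sign and $\tau$ assignment recovers $X_{i_0}$ with error at most $\frac{3\Gamma}{4}$, provided $X_{i_0}<\frac{lcm(M_l\in\mathbb{U})+lcm(M_l\in\overline{\mathbb{U}})}{2}$ for that partition. Taking the worst case over which $\lceil L/N\rceil$ moduli survive, namely the smallest ones $\{M_1,\dots,M_{\lceil L/N\rceil}\}$, and minimizing over $\mathbb{U}$ yields exactly the stated range $D=\min_{\mathbb{U}}\big(\frac{lcm(M_l\in\mathbb{U})+lcm(M_l\in\overline{\mathbb{U}})}{2}-1\big)\Gamma$. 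Once a number is pinned down, its residues (within tolerance $\delta$ of the predicted $\langle\pm\hat X_{i_0}\rangle_{m_l}$) are peeled off and the argument is repeated on the remaining $N-1$ numbers.

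The error control simply reuses the two facts highlighted for the complex case: under a correct proposal the shifted common residues obey $|(\widetilde r^{c}_{(i,l)}-\tau_{(i,l)})-(\widetilde r^{c}_{(i,l')}-\tau_{(i,l')})|<\Gamma/2$, so that averaging them estimates $r^{c}_{i_0}=\langle X_{i_0}\rangle_{\Gamma}$ accurately while the folding number is recovered exactly by CRT; combining the two reproduces the $\frac{3\Gamma}{4}$ bound established in Theorem~\ref{thm:single-real}.

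The step I expect to be the main obstacle is the verification, i.e. proving that no spurious proposal survives the two tests. Compared with the complex case, the sign ambiguity doubles the number of objects and opens a new failure mode: a positive copy of $X_{i_0}$ can masquerade as a negative copy of some $X_{i_1}$, so a cluster mixing such residues may still be internally consistent modulo every $m_l$. The crucial estimate is to show that any such mixed cluster forces the CRT-reconstructed value either outside $[0,D)$ or into a common-residue spread exceeding $\Gamma/2$, and this is precisely where the symmetric lower bound $\frac{lcm(\mathbb{U})+lcm(\overline{\mathbb{U}})}{2}$, rather than the plain product used in the complex case, is indispensable: the two consistency systems $D+Y\equiv0$ and $D-Y\equiv0$ of Section~\ref{sec:real_fre_single} are exactly the obstructions a mixed cluster must violate. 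Establishing that this bound simultaneously certifies correct clusters and rejects all mixed ones, uniformly over the enumeration, is the technical crux; the remaining peeling and error-accumulation arguments are routine once a single number has been correctly isolated.
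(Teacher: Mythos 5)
Your overall architecture coincides with the paper's: the same hypothesis-then-testing scheme lifted from Theorem \ref{thm:complex_fre}, clustering the residues into $2N$ groups with binary shifts $\tau_{(i,l)}$, the two validity tests (common-residue spread below $\Gamma/2$, folding number inside $[0,D_q)$), the pigeonhole count giving $\lceil L/N\rceil$ residues from a single pair $\{X_{i_0},-X_{i_0}\}$, and the symmetric range formula of Section \ref{sec:real_fre_single} determining $D$. However, you have deferred rather than proven exactly the step on which the paper's proof spends essentially all of its effort, and the plan you sketch for that step is wrong in a substantive way. You propose to show that ``any such mixed cluster forces the CRT-reconstructed value either outside $[0,D)$ or into a common-residue spread exceeding $\Gamma/2$,'' i.e., that the two tests reject every impure cluster. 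This is false: a surviving cluster need not be pure. In the paper's analysis, when for instance $Y_1\equiv -Y_1-1 \bmod lcm(M_l\in\mathbb{U}_{1e})$, or when folding numbers of different $X_i$, $X_j$ share residues modulo some $M_l$ (the interchangeable-residue phenomenon), a mixed cluster passes both tests. What the paper actually establishes is \emph{conditional accuracy}, not purity: any proposal passing both tests must have $q_i=Y_1$ (or $Y_1+1$, compensated by the shifted common residues), because the alternative $q_i=k\cdot lcm(M_l\in\mathbb{U}_{1a})+Y_1$ with $k\geq 1$ exceeds $D_q$; and then, since every foreign shifted common residue in the accepted cluster lies within $\Gamma/2$ of a true one and hence within $3\Gamma/4$ of $r^{+}_{1c}$, the averaged output satisfies $|\hat X_1-X_1|<\frac{3\Gamma}{4}$. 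This is precisely why the theorem's bound is $\frac{3\Gamma}{4}$ rather than $\frac{\Gamma}{4}$: rejecting mixed clusters outright is only possible under the additional separation hypothesis of Corollary \ref{cor-real-one}. Your route, pursued as stated, would stall at exactly this point.

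Two further deviations from the paper's argument deserve mention. First, invoking Theorem \ref{thm:single-real} as a black box on the sub-instance of $\lceil L/N\rceil$ moduli ignores the remaining $L-\lceil L/N\rceil$ congruences $q_i\equiv\lfloor\widetilde{r}_{(i,l')}/\Gamma\rfloor+\tau_{(i,l')}\bmod M_{l'}$ that the accepted proposal also imposes through foreign residues; the paper must, and does, argue separately that any disagreement among these pushes $q_i$ out of $[0,D_q)$, and this requires the explicit case analysis (the four admissible pairs $(Y_i,-Y_i-1)$, $(Y_i,-Y_i)$, $(Y_i-1,-Y_i-1)$, $(Y_i-1,-Y_i)$, the six candidate integers $Y_i$, $Y_i\pm1$, $-Y_i$, $-Y_i\pm1$, and the exclusions $|\mathbb{U}_a|\cdot|\mathbb{U}_c|=0$, $|\mathbb{U}_d|\cdot|\mathbb{U}_e|=0$, $|\mathbb{U}_c|\cdot|\mathbb{U}_e|=0$ derived from criterion (\ref{key-real-mul})), none of which your reduction supplies. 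Second, your sequential peeling of recovered numbers does not appear in the paper, which validates all clusters simultaneously; peeling inherits the interchangeable-residue ambiguity at moduli where two numbers' residues are within tolerance, and you give no rule resolving it. The pigeonhole count, the worst-case restriction to $\{M_1,\dots,M_{\lceil L/N\rceil}\}$, and the final averaging estimate in your sketch are sound, but the verification argument — the technical crux you yourself flag — is both missing and, as planned, unachievable in the form you describe.
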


\section{Conclusion}
\label{conclusion}
\noindent From a (robust) encoding-then-decoding viewpoint, this paper advances the understanding of sampling theory from a new perspective: given the sparsity restriction of a set of samplers temporally or spatially, what is the necessary and sufficient condition of the signal reconstruction from acquired samples? We characterize and prove such conditions in several common parameter estimation tasks such as frequency, phase difference, DoA and Doppler shift. 
One main takeaway is that, with a random selection of sampling parameter and a negligible failure probability, it is possible that one can asymptotically sharpen the deterministic necessary sampling constraint. We leave a systematical study on such improvement to our future work. 

\bibliographystyle{ieeetr}
\bibliography{ref}

\appendices 
\section{Proof of Theorem \ref{thm:complex_fre}}
\label{pf:thm_complex}
\begin{proof}
With the fact that $\lfloor\frac{X_i}{\Gamma}\rfloor\equiv\lfloor\frac{\langle X_i\rangle_{m_l}}{\Gamma}\rfloor\equiv\lfloor\frac{\langle X_i\rangle_{m_l}-\langle X_i\rangle_{\Gamma}}{\Gamma}\rfloor\equiv\frac{r_{i,l}-r^c_{i,l}}{\Gamma}\mod M_l$, to estimate the folding number $\lfloor \frac{X_i}{\Gamma} \rfloor$, we obtain
\begin{equation}
\label{folding-number}
\langle \lfloor \frac{ \widetilde{r}_{il}}{\Gamma} \rfloor \rangle_{M_l} = \left\{
            \begin{array}{lr}
          1). \langle \lfloor \frac{X_i}{\Gamma} \rfloor \rangle_{M_l}, ~when~ r^c_i + \Delta_{il} \in[0,\Gamma) \\
          2).  \langle \lfloor \frac{X_i}{\Gamma} \rfloor -1\rangle_{M_l}, ~when~ r^c_i + \Delta_{il} \in(-\Gamma,0) \\
          3).  \langle \lfloor \frac{X_i}{\Gamma} \rfloor +1\rangle_{M_l}, ~when~ r^c_i + \Delta_{il} \in[\Gamma,2\Gamma)
            \end{array}
             \right.
\end{equation}
Since $2\delta=\frac{\Gamma}{2}$, the three mentioned cases cannot happen simultaneously. But case 1) and 2), or case 1) and 3), can happen at the same time, which means there could exist $l_1, l_2 \in \{1,2,...,L\}$ such that $r^c_i + \Delta_{il_1} \in [0,\Gamma)$ and $r^c_i + \Delta_{il_2} \in (-\Gamma,0)$. Therefore, $\langle \frac{ \widetilde{r}^{c}_{il_1}}{\Gamma} \rangle_{M_{l_1}}$ and $\langle \frac{ \widetilde{r}^{c}_{il_2}}{\Gamma} \rangle_{M_{l_2}}$ can be residues of either one of $\{ \lfloor \frac{X_i}{\Gamma} \rfloor, \lfloor \frac{X_i}{\Gamma} \rfloor+1, \lfloor \frac{X_i}{\Gamma} \rfloor-1 \}$. Simply aggregating them via CRT will bring unpredictable reconstruction errors. Without loss of generality, we only consider that case 1) and 2) occur in the following.

To this end, following \cite{multinumrcrt}, we consider figuring out the order of $l$ such that $\Delta_{il}$ are sorted in an ascending order. Said another way, it is equivalent to determining a binary parameter $\tau_{il} \in \{0,1\}$ such that the order of $l$, where $\hat{r}^c_{il}=\widetilde{r}^{c}_{il}-\tau_{il}\Gamma$ are in ascending order, is the same as that where $\Delta_{il}$ are sorted non-decreasingly. If so, $\langle \lfloor \frac{\widetilde{r}_{il}+\tau_{il} \Gamma}{\Gamma}  \rfloor  \rangle_{M_l} = \langle  \frac{\widetilde{r}_{il}-\widetilde{r}^c_{il}}{\Gamma} + \tau_{il} \rangle_{M_l}$ are the residues of one integer, one of $\{\lfloor \frac{X_i}{\Gamma} \rfloor,\lfloor \frac{X_i}{\Gamma}\rfloor+1,\lfloor \frac{X_i}{\Gamma}\rfloor-1\}$.
Clearly, with proper $\tau_{il}$, the problem can be addressed by the generalized CRT \cite{sharpened}. 
Furthermore, $|(\widetilde{r}^c_{il_1}-\tau_{il_1}\Gamma)-(\widetilde{r}^c_{il_2}-\tau_{il_2}\Gamma)|$ becomes the minimum distance between $\widetilde{r}^c_{il_1}$ and $\widetilde{r}^c_{il_2}$ in the ring of length $\Gamma$.
Thus, for any $l_1, l_2$,
  \begin{equation}
  \label{key-idea}
  | \widetilde{r}^c_{il_1}-\tau_{il_1}\Gamma - (\widetilde{r}^c_{il_2}-\tau_{il_2}\Gamma ) | < 2\delta = \frac{\Gamma}{2}
  \end{equation}
Equation (\ref{key-idea}), called the {\em first criterion} in the following, is one of the keys in the rest of the proof, which is also overlooked in previous works.

Now, we prove the following fact: {\em Given a selection of $\{\tau_{il}\}$, if they satisfy the first criterion, the shifted residues $\{ \langle \lfloor \frac{\widetilde{r}_{il}+\tau_{il}\Gamma}{\Gamma}\rfloor \rangle_{M_l}, l=1,2,...,L\}$ are all residues of one of $\{\lfloor \frac{X_i}{\Gamma} \rfloor, \lfloor \frac{X_i}{\Gamma} \rfloor+1, \lfloor \frac{X_i}{\Gamma} \rfloor-1 \}$.}

For an arbitrary selection of $\tau_{il}$ assigned, let $\mathbb{U}_1 \in \{1,2,...,L\}$, where case 1) occurs for $l_1 \in \mathbb{U}_1$. Correspondingly, $\mathbb{U}_2 = \overline{\mathbb{U}}_1$, where case 2) happens for $l_2 \in \mathbb{U}_2$. 

Analogously, we also specify the indices of $M_l$, where $\tau_{il}$ are set to $0$ or $1$, respectively.  We use $\mathbb{U}_{11} \subset \mathbb{U}_1$ to denote those $l_{11} \in \mathbb{U}_{11}$ such that $\tau_{il_{11}}=0$. Similarly, $l_{12} \in \mathbb{U}_{12} \subset \mathbb{U}_1,$ where $\tau_{il_{12}}=1$; $l_{21} \in \mathbb{U}_{21} \subset \mathbb{U}_2$, where $\tau_{il_{22}}=0$, and $l_{22} \in \mathbb{U}_{22} \subset \mathbb{U}_2$, $\tau_{il_{22}}=1$. Then, we have the following observations for the four subcases, respectively: (1). $\langle  \frac{\widetilde{r}_{il_{11}}-\widetilde{r}^c_{il_{11}} }  {\Gamma} + \tau_{il_{11}} \rangle_{M_l} = \langle \frac{X_i}{\Gamma} \rangle_{M_{l_{11}}}$; (2). $\langle \frac{\widetilde{r}_{il_{12}}-\widetilde{r}^c_{il_{12}} } {\Gamma}  + \tau_{il_{12}}  \rangle_{M_l} = \langle \frac{X_i}{\Gamma}+1 \rangle_{M_{l_{12}}}$;
(3). $\langle\frac{\widetilde{r}_{il_{21}}-\widetilde{r}^c_{il_{21}}  } {\Gamma}+ \tau_{il_{21}}\rangle_{M_l} = \langle \frac{X_i}{\Gamma} -1 \rangle_{M_{l_{21}}}$; (4). $ \langle\frac{\widetilde{r}_{il_{22}}-\widetilde{r}^c_{il_{22}}  } {\Gamma}+ \tau_{il_{22}} \rangle_{M_l} = \langle \frac{X_i}{\Gamma} \rangle_{M_{l_{22}}}$.

For those $\langle \frac{\widetilde{r}_{il}-\widetilde{r}^c_{il} } {\Gamma}+\tau_{il} \rangle_{M_l}$, we assume that there exists a solution of $N$ integers $\{q_1, q_2, ... ,q_N\}$ within the dynamic range, i.e., $q_i\equiv \frac{\widetilde{r}_{k_l(i)l}-\widetilde{r}^c_{k_l(i)l} } {\Gamma}+\tau_{k_l(i)l}\mod M_l$. Here, $k_l(i)$ is a permutation on $\{1,2,...,N\}$, indicating the correspondence between $\widetilde{r}_{k_l(i)l}$ and $q_i$ for each $M_l$. 
We elaborate on the wrong residue classifications below. 
\begin{itemize}
\item There exists $i \in \{1,2,...,N\}$ such that $k_{l_{11}}(i) = i_0$ and $k_{l_{12}}(i) = i_0$.
Then, $| \widetilde{r}^c_{i_0l_{11}}-\tau_{i_0l_{11}}\Gamma -  (\widetilde{r}^c_{i_0l_{12}}- \tau_{i_0l_{12}}\Gamma) |=|\widetilde{r}^c_{i_0l_{11}}- \widetilde{r}^c_{i_0l_{12}}+\Gamma| = |r^c_{i_0}+\Delta_{i_0l_{11}}-r^c_{i_0}-\Delta_{i_0l_{12}}+\Gamma| > \Gamma - 2\delta = \frac{\Gamma}{2}$

\item There exists $i \in \{1,2,...,N\}$ such that $k_{l_{11}}(i) = i_0$ and $k_{l_{21}}(i) = i_0$.
Then $|\widetilde{r}^c_{i_0l_{11}}- \tau_{il_{11}}\Gamma -(\widetilde{r}^c_{i_0l_{21}}- \tau_{il_{21}}\Gamma) |=|\widetilde{r}^c_{i_0l_{11}}- \widetilde{r}^c_{i_0l_{21}}| = |r^c_{i_0}+\Delta_{i_0l_{11}} - (r^c_{i_0}+\Delta_{i_0l_{21}}+\Gamma) | > \frac{\Gamma}{2}$

\item There exists $i \in \{1,2,...,N\}$ such that $k_{l_{21}}(i) = i_0$ and $k_{l_{22}}(i) = i_0$.
Then $|\widetilde{r}^c_{i_0l_{21}}- \tau_{il_{21}}\Gamma -(\widetilde{r}^c_{i_0l_{22}}- \tau_{il_{22}}\Gamma) |=  |r^c_{i_0}+\Delta_{i_0l_{21}} +\Gamma - (r^c_{i_0}+\Delta_{i_0l_{22}}+\Gamma-\Gamma) |> \frac{\Gamma}{2}$

\item There exists $i \in \{1,2,...,N\}$ such that $k_{l_{12}}(i) = i_0$ and $k_{l_{22}}(i) = i_0$.
Then $|\widetilde{r}^c_{i_0l_{12}}- \tau_{il_{12}}\Gamma -(\widetilde{r}^c_{i_0l_{22}}- \tau_{il_{22}}\Gamma) |=  |r^c_{i_0}+\Delta_{i_0l_{12}} -
\Gamma - (r^c_{i_0}+\Delta_{i_0l_{22}}+\Gamma-\Gamma) |> \frac{\Gamma}{2}$

\item There exists $i \in \{1,2,...,N\}$ such that $k_{l_{12}}(i) = i_0$ and $k_{l_{21}}(i) = i_0$.
Then $|\widetilde{r}^c_{i_0l_{12}}- \tau_{il_{12}}\Gamma -(\widetilde{r}^c_{i_0l_{21}}- \tau_{il_{21}}\Gamma) |=  |r^c_{i_0}+\Delta_{i_0l_{12}} -
\Gamma - (r^c_{i_0}+\Delta_{i_0l_{21}}+\Gamma) |> \frac{\Gamma}{2}$
\end{itemize}
Therefore, any incorrect selection of $\tau_{il}$ in $S_i$ will result in a contradiction to the {\em first criterion}.

Next, we state the {\em second criterion} that the recovered $\{q_1,q_2,\dots,q_N\}$ should be within the dynamic range $[0,D_q)$, where $D_q=\prod_{l=1}^{\lceil\frac{L}{N}\rceil} M_l.$ In the following, we prove that a proper selection of $\tau_{il}$ and clustering satisfying both criteria are sufficient for a robust reconstruction.   

For $X_i<(\prod^{\lceil\frac{L}{N}\rceil}_{l=1}M_l-1)\Gamma$, we obtain $\lfloor\frac{X_i}{\Gamma}\rfloor+1<\prod^{\lceil\frac{L}{N}\rceil}_{l=1}M_l$. If the residues of $q_i$ are from one integer, we have  $q_i\in\{\lfloor\frac{X_i}{\Gamma}\rfloor,\lfloor\frac{X_i}{\Gamma}\rfloor+1,\lfloor\frac{X_i}{\Gamma}\rfloor-1\}$, i.e., $q_i<\prod^{\lceil\frac{L}{N}\rceil}_{l=1}M_l\Gamma$.
We assume that there exists another solution $\{q_1',q_2',\dots,q_N'\}$. Then, for the residue classification of $q'_1$, there must be at least $\lceil \frac{L}{N} \rceil$ residues of $q'_1$ are from one of $\{\lfloor\frac{X_i}{\Gamma}\rfloor,\lfloor\frac{X_i}{\Gamma}\rfloor+1,\lfloor\frac{X_i}{\Gamma}\rfloor-1\}$. On the other hand, $q'_1$ cannot share the residues of $\lfloor\frac{X_{i_0}}{\Gamma}\rfloor$, $\lfloor\frac{X_{i_0}}{\Gamma}\rfloor+1$ and $\lfloor\frac{X_{i_0}}{\Gamma}\rfloor-1$ simultaneously. Otherwise, the first criterion is violated. Thus, without loss of generality, we assume there exist $\lceil\frac{L}{N}\rceil$ residues of $q'_1$ are from $\lfloor\frac{X_{i_0}}{\Gamma}\rfloor$, i.e., $q'_1\equiv \lfloor\frac{X_{i_0}}{\Gamma}\rfloor\mod lcm(M_l\in \mathbb{U}')$, where $|\mathbb{U}'|=\lceil\frac{L}{N} \rceil$.
Thus, $q'_1=klcm(M_l\in \mathbb{U}')+\lfloor\frac{X_{i_0}}{\Gamma}\rfloor$, where $k\geq 0$.
With our assumption of $D_q$, if $k=0$, $q'_1 = \lfloor\frac{X_{i_0}}{\Gamma}\rfloor$. Otherwise, $k\geq 1$, $q'_1>D_q$, a contradiction. Now, after determining the folding numbers $q_i$, we obtain
\begin{equation}
X_i=q_i\Gamma +\frac{\sum^{l=L}_{l=1}\hat r_{k_l(i)l}}{L}.
\end{equation}

Finally, we determine the worst-case reconstruction error. From the above discussions, with a correct selection of $\tau_{il}$ and clustering such that $S_i$ are all residues of $X_i$, which clearly satisfy the two criteria, and then the reconstruction error is upper bounded by the maximal magnitude of $\Delta_{il}$. However, there is a special case that the folding number of different $X_i$ and $X_j$ may share the same residues: there exist some $l \in \{1,2,...,L\}$ such that $i \not = j \in \{1,2,...,N\}$, $q_{i} \equiv q_{j} \mod M_l$. They are essentially interchangeable, which will not lead to the failure to recover $q_i$, but we have to quantify the reconstruction error carefully.Since $\hat{r}^c_{jl}$ is assigned as a common residue to $X_{i}$, i.e., the common residues assigned to $X_{i}$ are a combination of $\hat{r}^c_{il_1}$ and $\hat{r}^c_{jl_2}$.
Without loss of generality, we assume that $q_{i_0}=\lfloor\frac{X_{i_0}}{\Gamma}\rfloor-1$, which means there are at least $\lceil\frac{L}{N}\rceil$ common residues are from $X_{i_0}$.
It must fall into case 2) and $\tau_{i_0l}=0$.
Since $r^c_{i_0}+\Delta_{i_0l}<0$, $\hat{r}^c_{i_0}=\langle\widetilde{r}^c_{i_0l}\rangle_{\Gamma}=r^c_{i_0}+\Delta_{i_0l}+\Gamma$.
The rest common residues satisfy $\max|\hat{r}^c_{il}-\hat{r}^c_{i_0l}|<\frac{\Gamma}{2}$.
Then, $|\hat{X}_{i_0}-X_{i_0}|$ is equal to
\begin{equation}
\begin{aligned}
&|(\lfloor\frac{X_{i_0}}{\Gamma}\rfloor-1)\Gamma+\frac{\sum^{l=\lceil\frac{L}{N}\rceil}_{l=1}\hat{r}^c_{i_0l}+\sum^{l=L}_{l=\lceil\frac{L}{N}\rceil+1}\hat{r}^c_{il}}{L}-X_{i_0}|\\
&<|\frac{\sum^{l=\lceil\frac{L}{N}\rceil}_{l=1}\max\Delta_{i_0l}+\sum^{l=L}_{l=\lceil\frac{L}{N}\rceil+1}(\max\Delta_{i_0l}+\frac{\Gamma}{2})}{L}|\\
&<|\frac{\lceil\frac{L}{N}\rceil\frac{\Gamma}{4}+(L-\lceil\frac{L}{N}\rceil)\frac{3\Gamma}{4}}{L}|<\frac{3\Gamma}{4}.
\end{aligned}
\end{equation}
Therefore, the reconstruction is error-bounded by $\frac{3\Gamma}{4}$.
\end{proof}

\section{Proof of Corollary \ref{cor:1/4-complex}}
\begin{proof}
Likewise, let $q_i$ denote the folding number estimations, where $q_i\in\{\lfloor\frac{X_i}{\Gamma}\rfloor,\lfloor\frac{X_i}{\Gamma}\rfloor+1,\lfloor\frac{X_i}{\Gamma}\rfloor-1\}$. It is noted that $\min_{d \in  \{0, \pm 1\}}|r_{i_0l}-r_{i_1l}+dM_l\Gamma|$ represents the minimum distance between $r_{i_0l}$ and $r_{i_1l}$ on the circle of length $M_l\Gamma$. Then, we have the following facts and at least one of the three cases happens: 1). $r_{il}=\langle q_i\Gamma\rangle_{M_l\Gamma}+r^c_i$, when $q_i=\lfloor\frac{X_i}{\Gamma}\rfloor$; 2). $r_{il}=\langle q_i\Gamma\rangle_{M_l\Gamma}+r^c_i-\Gamma$, when $q_i= \lfloor\frac{X_i}{\Gamma}\rfloor+1$; 3). $r_{il}=\langle q_i\Gamma\rangle_{M_l\Gamma}+r^c_i+\Gamma$, when $q_i= \lfloor\frac{X_i}{\Gamma}\rfloor-1$.

When $r_{i_0l}$ and $r_{i_1l}$ satisfy 2) and 3) respectively,
since $\min|r_{i_0l}-r_{i_1l}+dM_l\Gamma|>3\Gamma$, replacing $r_{i_0l}$ and $r_{i_1l}$ with the right hand of 2) and 3), we have
\begin{equation}
    \min|\langle q_{i_0}\Gamma\rangle_{M_l\Gamma}-\langle q_{i_1}\Gamma\rangle_{M_l\Gamma}+dM_l\Gamma+(r^c_{i_0}-\Gamma-r^c_{i_1}-\Gamma)|>3\Gamma,
\end{equation}
and thus $|r^c_{i_0}-r^c_{i_1}-2\Gamma|<3\Gamma$, which results in $q_{i_0}\not\equiv q_{i_1}\mod M_l$.
The same conclusion can be derived when $r_{i_0l}$ and $r_{i_1l}$ satisfy any two equations of 1), 2) and 3), i.e., for each $l$, $q_{i_0}\not\equiv q_{i_1} \mod M_l$ holds.
If $\hat{r}^c_{i_0l}$ and $\hat{r}^c_{i_1l}$ are clustered into one set $S_i$, the estimated folding number $q_i$ satisfies
\begin{equation}
   \left\{
            \begin{array}{lr}
           q_i\equiv q_{i_0}\mod lcm(M_l\in\mathbb{U}_{i_0})\\
           q_i\equiv q_{i_1}\mod lcm(M_l\in\mathbb{U}_{i_1}).
            \end{array}
             \right. 
\end{equation}
Based on the pigeonhole principle, at least $\lceil\frac{L}{N}\rceil$ residues in $S_i$ are from one integer.
Without loss of generality, we assume that $|\mathbb{U}_{i_0}|\geq\lceil\frac{L}{N}\rceil$.
Therefore, $q_i=klcm(M_l\in\mathbb{U}_{i_0})+q_{i_0}$. Because $q_{i_0}\not\equiv q_{i_1}\mod M_l$, we have $q_i\not=q_{i_0}$ and $k>0$, which leads to a contradiction to the {\em second criterion}.
That is to say, if for each $l$, $q_{i_0}\not\equiv q_{i_1} \mod M_l$, clustering residues from different integers violates the {\em second criterion}.
So, for each $q_i$, the residues must be all from $X_i$, which provides a sharpened robust reconstruction error bounded by $\frac{\Gamma}{4}$. \footnote{If $r_{il}$ is assumed to be uniformly distributed across $[0,M_l\Gamma)$, we have $\Pr(\min_{d}|r_{i_0l}-r_{i_1l}+dM_l\Gamma|>3\Gamma)= \prod_{l=1}^{L}\frac{M_l-6}{M_l}$.}
\end{proof}

\section{Closed-form Robust Reconstruction for Single Tone Real Waveforms}
\label{app:single_fre_real}
\noindent To be self-contained, we first introduce the setup and notations. Assume that the moduli are in a form $\{m_l = \Gamma M_l|l=1,2,\dots,L\}$, where $\{M_l\}$ are relatively co-prime, ranging in an ascending order. The dynamic range of $X$ is represented by $D$, i.e., $X\in [0,D)$, where $D\leq\min_{\mathbb{U} \subset \{M_1,M_2,\dots,M_L\}}(\frac{\prod_{l\in\mathbb{U}}M_l+\prod_{l\in\overline{\mathbb{U}}}M_l}{2}-1)\Gamma$. The residue set is $\mathcal{R}_l=\{\widetilde{r}^{+}_l=\langle X+\Delta^{+}_l\rangle_{m_l},\widetilde{r}^{-}_l=\langle -X+\Delta^{-}_l\rangle_{m_l}\}$, where $|\Delta^{\pm}_l| <\delta = \frac{\Gamma}{4}$.

With a similar idea, we consider selecting a residue from each $\mathcal{R}_l$ randomly to obtain a $L$-residue clustering $S$, which possibly contains residues from $X$ and $-X$ simultaneously. To recover the folding number $\lfloor\frac{X}{\Gamma}\rfloor$ robustly, we apply (\ref{op1}) and (\ref{op2}) on residues respectively. At least one of the operations leads to residues sorted in ascending order, i.e., residues are sorted in the same order as that of $\Delta^{\pm}_{l}$. Clearly, there exists an efficient solution when all residues in $S$ are from $X$ and are properly sorted. Otherwise, as proved in the next section, one can distinguish that, where at least one of the two criteria, (\ref{test1-real}) and (\ref{folding-number-range}), will not hold.

\begin{algorithm}
\caption{Robust Remaindering Decoding of Single Tone Real Waveform}
\textbf{Input}: Moduli set: $\mathcal{M}=\{m_l=M_l\Gamma|l=1,2,\dots,L\}$;\\
Residue Sets: $\mathcal{R}_{l}=\{\widetilde{r}^{+}_{l}, \widetilde{r}^{-}_{l}\}$, $l=1,2,\dots,L$.

\begin{algorithmic}[1]

\STATE \textbf{Repeat}: Propose a clustering assignment
\STATE Following the proposed clustering by selecting one residue from each $\mathcal{R}_l$ to obtain a $L$-residue clustering $S=\{\widetilde{r}_{(l)}|l=1,2,\dots,L\}$, where $(l)$ denotes the index of the residues assigned to $S$.
\STATE For each $\widetilde{r}_{(l)}$ in $S$, calculate the corresponding shifted common residues $\hat{r}_{c,(l)}$ based on (\ref{op1}) and (\ref{op2}) respectively. We obtain the shifted common residue set $S_{c1}$ and $S_{c2}$.

\STATE For $S_{c1}$ and $S_{c2}$, calculate $q\equiv\frac{\widetilde{r}_{(l)}-\hat{r}_{c,(l)}}{\Gamma}\mod M_l$ via CRT respectively.
\STATE\textbf{Until}: For $S_{c1}$ or $S_{c2}$, if each shifted common residue satisfies (\ref{test1-real}) and the corresponding $q$ satisfies that $q\in[0,D_q)$, where $D_q=\min_{\mathbb{U}\subset\{M_1,M_2,\dots,M_L\}}\frac{\prod_{l\in\mathbb{U}}M_l+\prod_{l\in\overline{\mathbb{U}}}M_l}{2}-1$, we output the one that passes the two tests. 
\end{algorithmic}
\textbf{Output}: $\hat{X}=q\Gamma+\frac{\sum^{L}_{l=1}\hat{r}_{c,(l)}}{L}$.
\label{alg:single_real}
\end{algorithm}

\section{Proof of Theorem \ref{thm:single-real}}
\label{app:pf_single_real}

\begin{proof}
With the fact that, $r^{-}_c=\Gamma-r^{+}_c$, it is clear that $\{r^{+}_c, r^{-}_c\}$ must fall into one of the following four cases, 
\begin{enumerate}
\centering
 \item$r^{+}_c \in [{\Gamma}/{4},{\Gamma}/{2})$ and $r^{-}_c \in ({\Gamma}/{2},{3\Gamma}/{4}]$
 \item$r^{+}_c \in [{\Gamma}/{2},{3\Gamma}/{4})$ and $r^{-}_c \in ({\Gamma}/{4},{\Gamma}/{2}]$
 \item $r^{+}_c \in [0,{\Gamma}/{4})$ and $r^{-}_c \in ({3\Gamma}/{4},\Gamma]$
 \item $r^{+}_c \in [{3\Gamma}/{4},\Gamma)$ and $r^{-}_c \in (0,{\Gamma}/{4}]$ 
\end{enumerate}
Recall (\ref{folding-number}), to ensure robustness, we consider applying $\tau^{+}_{l}\in\{0,1\}$ to determine the order of $l$ such that $\Delta^{+}_{l}$ are in ascending order, i.e., $\hat{r}^{+}_{c,l}=\widetilde{r}^{+}_{c,l}-\tau^{+}_{l}\Gamma$ are sorted in the order of $\Delta^{+}_{l}$. Here, we use $\widetilde{r}^{+}_{c,l} = \langle \widetilde{r}^{+}_{l} \rangle_{\Gamma}$, the noisy observation of common residue from the $l$-th sampler. Similarly, $\hat{r}^{-}_{c,l}=\widetilde{r}^{-}_{c,l}-\tau^{-}_{l}\Gamma$ can be sorted based on the same idea. To this end, we consider the following operations on the residues if we can distinguish cases (1,2) and (3,4). 
\begin{itemize}
\item operation 1: 
\begin{equation}
\label{op1}
\left\{
            \begin{array}{lr}
             \hat{r}^{+}_{c,l}=\widetilde{r}^{+}_{c,l} \\
             \hat{r}^{-}_{c,l}=\widetilde{r}^{-}_{c,l}
            \end{array}
             \right.
\end{equation}
\item operation 2: 
\begin{equation}
\label{op2}
\left\{
            \begin{array}{lr}
             \hat{r}^{+}_{c,l}=\widetilde{r}^{+}_{c,l} ~ \text{when} ~ \widetilde{r}^{+}_{c,l} \in [0,\frac{\Gamma}{2}),~ \text{or}~ \hat{r}^{+}_{c,l}=\widetilde{r}^{+}_{c,l}-\Gamma\\
           \hat{r}^{-}_{c,l}=\widetilde{r}^{-}_{c,l} ~ \text{when} ~ \widetilde{r}^{-}_{c,l} \in [0,\frac{\Gamma}{2}),~ \text{or}~ \hat{r}^{-}_{c,l}=\widetilde{r}^{-}_{c,l}-\Gamma
            \end{array}
             \right.
\end{equation}
\end{itemize}

To arrange $\widetilde{r}^{\pm}_{c,l}$ in an ascending order in cases (1,2), since $\delta = \frac{\Gamma}{4}$, we will apply operation 1, defined in (\ref{op1}), on $\widetilde{r}^{\pm}_{c,l}$, where $\hat{r}^{\pm}_{c,l} = \widetilde{r}^{\pm}_{c,l} = {r}^{\pm}_c + \Delta^{\pm}_l$. Consequently, $(\langle \frac{ \widetilde{r}^{+}_{l} -  \hat{r}^{+}_{c,l}}{\Gamma} \rangle_{M_l},\langle \frac{ \widetilde{r}^{-}_{l} -  \hat{r}^{-}_{c,l}}{\Gamma} \rangle_{M_l})$ are residues of $(Y,-Y -1)$ modulo $M_l$, where $Y$ denotes the folding number $\lfloor\frac{X}{\Gamma}\rfloor$.

Similarly, to sort $\widetilde{r}^{\pm}_{c,l}$ non-decreasingly in case 3), since $r^{+}_{c,l} + \Delta^{+}_l \in (-\frac{\Gamma}{4},\frac{\Gamma}{2})$ and $r^{-}_{c,l} + \Delta^{-}_l \in (\frac{\Gamma}{2}, \frac{5\Gamma}{4})$, by applying operation 2, shown in (\ref{op2}), $(\langle \frac{ \widetilde{r}^{+}_{l} -  \hat{r}^{+}_{c,l}}{\Gamma} \rangle_{M_l},\langle \frac{ \widetilde{r}^{-}_{l} -  \hat{r}^{-}_{c,l}}{\Gamma} \rangle_{M_l})$ are residues of $(Y,-Y)$ modulo $M_l$.

In case 4), which is the dual circumstance of case 3), $(\langle \frac{ \widetilde{r}^{+}_{l}-  \hat{r}^{+}_{c,l}}{\Gamma} \rangle_{M_l}, \langle \frac{ \widetilde{r}^{-}_{l} -  \hat{r}^{-}_{c,l}}{\Gamma} \rangle_{M_l})$ are residues of $(Y +1,-Y-1 )$ modulo $M_l$.

Provided sorted $\hat{r}^{\pm}_{c,l}$, since $\delta=\frac{\Gamma}{4}$, we have 
\begin{equation}
\label{test1-real}
    |\hat{r}_{c,l_1}-\hat{r}_{c,l_2}|< 2\delta=\frac{\Gamma}{2},
\end{equation}
where $\hat{r}_{c,l_1}$ and $\hat{r}_{c,l_2}$ denote the shifted common residues derive from one integer. Similar to the complex waveform case, (\ref{test1-real}) behaves as a criterion for a valid test.
However, the main obstacle is the misuse of operation 1 or 2, i.e., applying operation 1 on cases (3,4) or operation 2 on cases (1,2). For example, if operation 1 is applied in case 3, $ \langle \frac{ \widetilde{r}^{+}_l -  \hat{r}^{+}_{c,l}}{\Gamma} \rangle_{M_l}$ can be either $\langle Y \rangle_{M_l}$ or $\langle Y-1 \rangle_{M_l}$; and $ \langle \frac{ \widetilde{r}^{-}_l -  \hat{r}^{-}_{c,l}}{\Gamma} \rangle_{M_l}$ can be either $\langle -Y-1 \rangle_{M_l}$ or $\langle -Y \rangle_{M_l}$, which could produce an unpredictable reconstruction error. Indeed, it is impossible to distinguish the four cases just from the locations of $\widetilde{r}^{\pm}_{c,l}$.
But anyway, applying the two operations on $\widetilde{r}^{\pm}_{c,l}$, at least one of them can correctly recover $Y$. The rest proof is developed by two parts. First, if one happens to apply the proper operation, a robust estimation can be achieved. Second, if a wrong operation is applied, one can distinguish that.

Now, we prove that with correct operation, $X\in[0,D)$ is a sufficient condition that the folding number has a unique representation by residues attained, i.e., $\{Y,-Y-1\}$, $\{Y,-Y\}$ and $\{Y+1,-Y-1\}$ can be uniquely determined. Based on the conclusion from Section \ref{sec:real_fre_single}, for $\{Y, -Y\}$, the largest dynamic range of $Y$ is $\min_{\mathbb{U} \subset \{M_1, M_2, ... ,M_L\}}\frac{\prod_{l \in \mathbb{U}} M_l + \prod_{l \in \overline{\mathbb{U}}} M_l}{2}$. Analogously, the residue representation of $\{Y+1,-Y-1\}$ is unique once $Y < \min_{\mathbb{U} \subset \{M_1, M_2, ... ,M_L\}} \frac{\prod_{l \in \mathbb{U}} M_l + \prod_{l \in \overline{\mathbb{U}}} M_l}{2} -1$.
Similarly, for $\{Y, -Y-1\}$, the largest dynamic range is $\min_{\mathbb{U}\subset \{M_1, M_2, ... ,M_L\}} \ \frac{\prod_{l \in \mathbb{U}} M_l + \prod_{l \in \overline{\mathbb{U}}} M_l-1}{2}$.
As a summary, the folding number $Y$ has a unique residue
representation when 
\begin{equation}
\label{folding-number-range}
    Y < D_q=\min_{\mathbb{U} \subset \{M_1, M_2, ... ,M_L\}} \frac{\prod_{l \in \mathbb{U}} M_l + \prod_{l \in \overline{\mathbb{U}}} M_l}{2} -1.
\end{equation}
Since $X\in[0,\min_{\mathbb{U}\subset\{M_1,M_2,\dots,M_L\}}(\frac{\prod_{l\in\mathbb{U}}M_l+\prod_{l\in\overline{\mathbb{U}}}M_l}{2}-1)\Gamma)$, it's clear that $Y<D_q$, satisfying (\ref{folding-number-range}).

In the following, we switch to the second part and take a closer look at the scenario when a wrong operation is applied. To avoid tedious discussion, we only elaborate on case 2) in the following proof.
We list all the possible situations below. Let $\mathbb{U}_{11}$ denote those $l_{11}\in\mathbb{U}_{11}$ such that $r^{+}_{c}+\Delta^{+}_{l_{11}}\geq\frac{\Gamma}{2}$. Similarly, when $l_{12}\in\mathbb{U}_{12}$, $r^{+}_{c}+\Delta^{+}_{l_{12}}<\frac{\Gamma}{2}$; for $l_{21}\in\mathbb{U}_{21}$, $r^{-}_{c}+\Delta^{-}_{l_{21}}\geq\frac{\Gamma}{2}$; for $l_{22}\in\mathbb{U}_{22}$, $r^{-}_{c}+\Delta^{-}_{l_{22}}<\frac{\Gamma}{2}$. The wrong residue classifications violating (\ref{test1-real}) are elaborated below.
\begin{itemize}
    \item If $\hat{r}^{+}_{c,l_{11}}$ and $\hat{r}^{+}_{c,l_{12}}$ are clustered together, $|\hat{r}^{+}_{c,l_{11}}-\hat{r}^{+}_{c,l_{12}}|=|r^{+}_c+\Delta_{l_{11}}-\Gamma-(r^{+}_c+\Delta_{l_{12}})|=|\Gamma-(\Delta_{l_{11}}-\Delta_{l_{12}})|>\frac{\Gamma}{2}$.
    \item If $\hat{r}^{-}_{c,l_{21}}$ and $\hat{r}^{-}_{c,l_{22}}$ are clustered together, $|\hat{r}^{-}_{c,l_{21}}-\hat{r}^{+}_{c,l_{22}}|=|r^{-}_c+\Delta_{l_{21}}-\Gamma-(r^{-}_c+\Delta_{l_{22}})|=|\Gamma-(\Delta_{l_{21}}-\Delta_{l_{22}})|>\frac{\Gamma}{2}$.
    \item If $\hat{r}^{+}_{c,l_{12}}$ and $\hat{r}^{-}_{c,l_{21}}$ are clustered together, since $\hat{r}^{+}_{c,l_{12}}\in(\frac{\Gamma}{4},\frac{\Gamma}{2})$ and $\hat{r}^{-}_{c,l_{21}}\in(-\frac{\Gamma}{2},-\frac{\Gamma}{4})$,  $|\hat{r}^{+}_{c,l_{12}}-\hat{r}^{-}_{c,l_{21}}|>\frac{\Gamma}{2}$.
\end{itemize}
Therefore, under the restriction of (\ref{test1-real}), the estimated folding number $q$ must be in one of the following forms in case 2) applied with operation 2.
\begin{itemize}
\item [1] $q \equiv Y+1 \mod lcm( M_l \in \mathbb{U}_{11})$ and $q \equiv -Y-1 \mod lcm( M_l \in \mathbb{U}_{22})$ 
\item [2] $q \equiv Y+1 \mod lcm( M_l \in \mathbb{U}_{11})$ and $q \equiv -Y \mod lcm( M_l \in \mathbb{U}_{21})$
\item [3] $q \equiv Y \mod lcm( M_l \in \mathbb{U}_{12})$ and $q \equiv -Y-1 \mod lcm( M_l \in \mathbb{U}_{22})$
\end{itemize}

With the above understanding, finally, we show $\hat{X}$ can be recovered error-bounded by $\frac{3\Gamma}{4}$.
Obviously, $q$ must be in the form under operation 1:
\begin{itemize}
    \item $q \equiv Y \mod lcm( M_l \in \mathbb{U}_{1})$ and
$q \equiv -Y-1 \mod lcm( M_l \in \mathbb{U}_{2})$.
\end{itemize}
The lower bound of $\max\{lcm(M_l\in\mathbb{U}_{1}),lcm(M_l\in\mathbb{U}_{2})\}$ is $\omega = \min_{\mathbb{U}\subset\{M_1,M_2,\dots,M_L\}} \max \{\prod_{l\in\mathbb{U}}M_l,\prod_{l\in\overline{\mathbb{U}}}M_l \}$, which is bigger than $D_q$. Without loss of generality, we assume that $lcm( M_l \in \mathbb{U}_{1}) \geq lcm( M_l \in \mathbb{U}_{2})$. Thus,
\begin{equation}
\label{op1-form}
q = k \cdot lcm( M_l \in \mathbb{U}_{1}) + Y, k \geq 0
\end{equation}
If $Y \not \equiv -Y-1 \mod lcm( M_l \in \mathbb{U}_{2})$ and $\mathbb{U}_2\not=\emptyset$, $k\geq 1$ and $q$ clearly exceeds $D_q$, a contradiction. 
Only when $Y \equiv -Y-1 \mod lcm( M_l \in \mathbb{U}_{2})$ or $\mathbb{U}_2=\emptyset$, can $q$ be within the dynamic range, i.e., $k=0$ and $q=Y$.
Thus, the shifted common residues assigned to $q$ are $\hat{r}^{+}_{c,l_1}$ and $\hat{r}^{-}_{c,l_2}$, where $l_1\in\mathbb{U}_1$ and $l_2\in\mathbb{U}_2$. It is worth pointing out that $\mathbb{U}_2=\emptyset$ may hold, i.e., the shifted common residues assigned to $q$ are all in a form $\hat{r}^{+}_{c,l_1}$. For $\hat{r}^{\pm}_{c,l}=\widetilde{r}^{\pm}_{c,l}$, we have $|\widetilde{r}^{-}_{c,l_2}-\widetilde{r}^{+}_{c,l_1}|=|\widetilde{r}^{-}_{c,l_2}-(r^{+}_{c,l_1}+\Delta^{+}_{c,l_1})|<\frac{\Gamma}{2}$ based on (\ref{test1-real}). Since $|\Delta^{+}_{c,l_1}|<\frac{\Gamma}{4}$, $|\widetilde{r}^{-}_{c,l_2}-r^{+}_{c}|<\frac{3\Gamma}{4}$ holds. Thus, 
\begin{equation}
    \begin{aligned}
|\hat{X}-X|&=|q\Gamma+\frac{\sum_{l_1\in\mathbb{U}_1}\widetilde{r}^{+}_{c,l_1}+\sum_{l_2\in\mathbb{U}_2}\widetilde{r}^{-}_{c,l_2}}{L}-X|\\
&<|\lfloor\frac{X}{\Gamma}\rfloor\Gamma+r^{+}_{c}+\frac{3\Gamma}{4}-X|=\frac{3\Gamma}{4}.
    \end{aligned}
\end{equation}

When we apply operation 2 to case 2), $q$ will be one of the three kinds.
\begin{itemize}
\item [1] $q \equiv Y+1 \mod lcm( M_l \in \mathbb{U}_{11})$ and $q \equiv -Y \mod lcm( M_l \in \mathbb{U}_{21})$
\item [2] $q \equiv Y+1 \mod lcm( M_l \in \mathbb{U}_{11})$ and $q \equiv -Y-1 \mod lcm( M_l \in \mathbb{U}_{22})$ 
\item [3] $q \equiv Y \mod lcm( M_l \in \mathbb{U}_{12})$ and $q \equiv -Y-1 \mod lcm( M_l \in \mathbb{U}_{22})$
\end{itemize}
We start with the first situation, where $\Gamma >r^{+}_c+\Delta^{+}_{l_{11}}\geq\frac{\Gamma}{2}$ and $\Gamma> r^{-}_c+\Delta^{-}_{l_{21}}\geq\frac{\Gamma}{2}$. Likewise, under the assumption $lcm(M_l\in \mathbb{U}_{11})\geq lcm(M_l\in \mathbb{U}_{21})$, $q$ would be in a form $q=klcm(M_l\in \mathbb{U}_{11})+Y+1$. When $Y+1\not\equiv-Y\mod lcm(M_l\in \mathbb{U}_{21})$ and $\mathbb{U}_{21}\not=\emptyset$, $q>\omega$ with $k>0$, i.e., $q=Y+1$ is the only solution. If $Y+1\equiv-Y\mod lcm(M_l\in \mathbb{U}_{21})$ or $\mathbb{U}_{21}=\emptyset$, the shifted common residues assigned to $q$ are $\hat{r}^{+}_{c,l_{11}}$ and $\hat{r}^{-}_{c,l_{21}}$. For \[
\begin{aligned}
\hat{X} & =(Y+1)\Gamma+\frac{\sum_{l_{11}\in\mathbb{U}_{11}}\hat{r}^{+}_{c,l_{11}}+\sum_{l_{21}\in\mathbb{U}_{21}}\hat{r}^{-}_{c,l_{21}}}{L} \\
& =Y\Gamma+\frac{\sum_{l_{11}\in\mathbb{U}_{11}}\widetilde{r}^{+}_{c,l_{11}}+\sum_{l_{21}\in\mathbb{U}_{21}}\widetilde{r}^{-}_{c,l_{21}}}{L}
\end{aligned},\] we have $$|\hat{X}-X|=|Y\Gamma+\frac{\sum_{l_{11}\in\mathbb{U}_{11}}\widetilde{r}^{+}_{c,l_{11}}+\sum_{l_{21}\in\mathbb{U}_{21}}\widetilde{r}^{-}_{c,l_{21}}}{L}-X|<\frac{3\Gamma}{4}.$$
The proof is similar for the rest situations, which is omitted for simplicity. 
\end{proof}

\begin{cor}
\label{cor-real-one}
If $\min_{d=0,\pm 1}|r^{+}_l-r^{-}_l+dM_l\Gamma|>\Gamma$ holds for each $l$, $X$ can be recovered robustly error bounded by $\frac{\Gamma}{4}$ for any $X\in[0,D)$.
\end{cor}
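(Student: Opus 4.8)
The plan is to mirror the proof of Corollary~\ref{cor:1/4-complex}, isolating the single configuration responsible for the weaker $3\Gamma/4$ bound in Theorem~\ref{thm:single-real} and showing that the extra hypothesis removes it. Recall that in that proof, once a surviving cluster $S$ consists \emph{only} of correctly shifted positive common residues $\hat r^{+}_{c,l}=r^{+}_c+\Delta^{+}_l$, each term is within $\delta=\Gamma/4$ of $r^{+}_c$, so the output $\hat X = q\Gamma+\frac1L\sum_l \hat r_{c,l}=X+\frac1L\sum_l\Delta^{+}_l$ already satisfies $|\hat X-X|<\Gamma/4$. The loss to $3\Gamma/4$ occurs \emph{only} when $S$ mixes a negative common residue $\widetilde r^{-}_{c,l_2}$ (a noisy copy of $r^{-}_c=\Gamma-r^{+}_c$) in among the positive ones, since then $|\widetilde r^{-}_{c,l_2}-r^{+}_c|$ can reach $3\Gamma/4$. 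Hence it suffices to prove that, under the separation hypothesis, no cluster mixing residues of $X$ and of $-X$ can pass both validity tests (\ref{test1-real}) and (\ref{folding-number-range}).

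I would then translate the hypothesis $\min_{d=0,\pm1}|r^{+}_l-r^{-}_l+dM_l\Gamma|>\Gamma$ into a statement about folding numbers. Writing $Y=\lfloor X/\Gamma\rfloor$ and using $r^{+}_l=\langle Y\rangle_{M_l}\Gamma+r^{+}_c$ together with $-X=(-Y-1)\Gamma+(\Gamma-r^{+}_c)$, so that $r^{-}_l=\langle -Y-1\rangle_{M_l}\Gamma+(\Gamma-r^{+}_c)$, one sees that if the positive and negative copies shared a folding number modulo $M_l$, i.e. $\langle Y\rangle_{M_l}=\langle -Y-1\rangle_{M_l}$ (equivalently $2Y+1\equiv0 \bmod M_l$), then the multiples of $\Gamma$ cancel and $\min_d|r^{+}_l-r^{-}_l+dM_l\Gamma|=|2r^{+}_c-\Gamma|<\Gamma$, contradicting the hypothesis. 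Thus the hypothesis forces $Y\not\equiv -Y-1 \pmod{M_l}$ for every $l$, and the same cancellation computation handles the $\pm1$-shifted coincidences ($Y+1\equiv-Y$ and $Y+1\equiv-Y-1$) that can arise when a wrong operation is applied, showing each of these also drives the circular distance below $\Gamma$.

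Finally I would close the argument with the dynamic-range test exactly as in Theorem~\ref{thm:single-real}. Any mixed cluster partitions its moduli into a positive part $\mathbb U_1$ and a negative part $\mathbb U_2$, and the recovered folding number satisfies $q\equiv(\text{positive folding})\bmod lcm(M_l\in\mathbb U_1)$ and $q\equiv(\text{negative folding})\bmod lcm(M_l\in\mathbb U_2)$; since $\max\{lcm(M_l\in\mathbb U_1),lcm(M_l\in\mathbb U_2)\}\ge\omega>D_q$ (with $\omega$ as defined in the proof of Theorem~\ref{thm:single-real}) and the two foldings now disagree modulo every $M_l$, we obtain $q=k\cdot lcm(\cdot)+(\text{folding})$ with $k\ge1$, forcing $q>D_q$ and violating (\ref{folding-number-range}); the residual $(Y,-Y)$-type confusion is already barred by the first criterion (\ref{test1-real}). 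Consequently only pure positive clusters survive, yielding the sharpened $\Gamma/4$ bound, and, as in the complex case, the hypothesis excludes only a negligible set of $X$. The main obstacle is the bookkeeping in the second step: one must enumerate, across the four sign cases for $\{r^{+}_c,r^{-}_c\}$ and both operations, every way a positive and a negative folding candidate can coincide modulo $M_l$, and verify that each such coincidence is ruled out either by (\ref{test1-real}) or by the $>\Gamma$ separation, with the $2(Y+1)\equiv0$ pattern being the tight case that pins down the threshold $\Gamma$.
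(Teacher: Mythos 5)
Your proposal is correct and follows essentially the same route as the paper's proof: the paper likewise writes $r^{+}_l=\langle Y\Gamma\rangle_{M_l\Gamma}+r^{+}_c$ and $r^{-}_l=\langle (-Y-1)\Gamma\rangle_{M_l\Gamma}+r^{-}_c$, cancels the $\Gamma$-multiples to deduce $Y\not\equiv-Y-1$, $Y+1\not\equiv-Y$ and $Y+1\not\equiv-Y-1\pmod{M_l}$ from the $>\Gamma$ circular separation, and then invokes the dynamic-range test of Theorem \ref{thm:single-real} (the form $q=k\cdot lcm(M_l\in\mathbb{U}_1)+Y$ with $k\geq 1$ exceeding $D_q$) to force $\mathbb{U}_2=\emptyset$, so that only pure positive clusters survive and $|\hat X-X|=|Y\Gamma+\frac{1}{L}\sum_l(r^{+}_c+\Delta^{+}_l)-X|<\frac{\Gamma}{4}$. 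The only difference is presentational: the paper elaborates case 2) and dismisses the remaining sign cases as similar, whereas your closing bookkeeping remark correctly anticipates that in case 3) the $(Y,-Y)$ coincidence ($2Y\equiv 0\bmod M_l$) is not always blocked by the first criterion and must instead be excluded by the same cancellation computation (distance $2r^{+}_c<\Gamma$ there), which your ``ruled out either by (\ref{test1-real}) or by the $>\Gamma$ separation'' covers.
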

\begin{proof}
Similarly, we only elaborate on case 2) for brevity.
$\min_{d=0,\pm 1} |r^+_l-r^-_l+dM_l\Gamma|>\Gamma$ denotes the closest distance between $r^+_l$ and $r^-_l$ on the circle of length $M_l\Gamma$.
\begin{equation}
\label{case2_residue_+}
 r^+_l=\langle X\rangle_{M_l\Gamma}=\langle Y\Gamma\rangle_{M_l\Gamma}+ r^+_c
\end{equation}
\begin{equation}
\label{case2_residue_-}
r^-_l=\langle - X\rangle_{M_l\Gamma}=\langle (-Y-1)\Gamma\rangle_{M_l\Gamma}+r^-_c
\end{equation}
For $\min| r^+_l- r^-_l+dM_l\Gamma|>\Gamma$, replacing $ r^+_l$ and $r^-_l$ with the right hand of (\ref{case2_residue_+}) and (\ref{case2_residue_-}) respectively, we obtain $$\min|\langle Y\Gamma\rangle_{M_l\Gamma}-\langle (-Y-1)\Gamma\rangle_{M_l\Gamma}+( r^+_c- r^-_c)+dM_l\Gamma|>\Gamma.$$
Since $0\leq r^+_c- r^-_c\leq\frac{\Gamma}{2}$, $$\min|\langle Y\Gamma\rangle_{M_l\Gamma}-\langle (-Y-1)\Gamma\rangle_{M_l\Gamma}+dM_l\Gamma|\neq 0,$$ i.e., for each $l$, $Y\not\equiv-Y-1\mod M_l$. Analogously, $Y+1\not\equiv-Y\mod M_l$ and $Y+1\not\equiv-Y-1\mod M_l$. When $q$ is in the form (\ref{op1-form}), since $Y\not\equiv-Y-1\mod M_l$, only if $\mathbb{U}_2=\emptyset$, will $q$ be within the range, i.e., the shifted common residues assigned to $q$ are in a form $\hat{r}^{+}_{c,l}$. Thus, $$|\hat{X}-X|=|Y\Gamma+\frac{\sum^{L}_{l=1}r^{+}_{c,l}}{L}-X|<\frac{\Gamma}{4}.$$
One can similarly verify the claim in other situations, which is omitted for simplicity.
\end{proof}
Hence, it is distinguishable if we classify residues of both $X$ and $-X$ into one clustering when $\min_{d =0,\pm 1} | r^+_l-r^-_l+dM_l\Gamma|>\Gamma$ holds for each $l$. 
If $r^{\pm}_l$ is uniformly distributed over the ring with a circumference of $M_l\Gamma$, 
$$\Pr(\min_{d=0,\pm 1} |r^+_l-r^-_l+dM_l\Gamma|>\Gamma)=\prod^{L}_{l=1}\frac{M_l-2}{M_l},$$  which can be close to $1$ given sufficiently large $M_l$.

\section{Proof of Theorem \ref{thm:real_fre}}
\label{pf:thm_real}
\begin{proof}
With a similar reasoning, the proof is organized into two parts. First, we prove that when $X_i\in[0,D)$, their folding numbers have a unique residue representation. Second, with proper residue classifications and sorting, one can uniquely derive a robust estimation for each $X_i$.

The proof for the first part is straightforward, where the residue representation of the folding number $\lfloor\frac{X_i}{\Gamma}\rfloor$ is unique. In the following, we denote $\lfloor\frac{X_i}{\Gamma}\rfloor$ by $Y_i$.
Clearly, since $X_i<D$, we have $$Y_i<\min_{\mathbb{U}\subset\{M_1,M_2,\dots,M_{\lceil\frac{L}{N}\rceil}\}}\frac{\prod_{l\in\mathbb{U}}M_l+\prod_{l\in\overline{\mathbb{U}}}M_l}{2}-1 = D_q,$$ where $\mathbb{U}\cup\overline{\mathbb{U}}=\{M_1,M_2,\dots,M_{\lceil\frac{L}{N}\rceil}\}$, satisfying (\ref{folding-number-range}).

Randomly selecting a residue from each $\mathcal{R}_l=\{\widetilde{r}^{+}_{i,l},\widetilde{r}^{-}_{i,l}|i=1,2,\dots,N\}$, we obtain a $L$-residue cluster $S_i=\{\widetilde{r}_{(i,l)}|i=1,2,\dots,N\}$, where $(i,l)$ denotes the index of the residues assigned to $S_i$. $S_i$ may contain residues from different pairs $\{X_i,-X_i\}$. Anyway, based on the pigeonhole principle, at least $\lceil \frac{L}{N}\rceil$ residues are from some $\{X_{i_0},-X_{i_0}\}$ in $S_i$. The estimated folding number obtained from $S_i$ is denoted by $q_i$. 

To avoid tedious discussion, we only consider the case where $r^{+}_{ic} \leq r^{-}_{ic}$.
Then, considering the tuple of residues $(\langle \lfloor \frac{ \widetilde{r}^+_{i,l}}{\Gamma} \rfloor \rangle_{M_l},\langle \lfloor \frac{ \widetilde{r}^-_{i,l}}{\Gamma} \rfloor \rangle_{M_l})$, it must be the residue of one of the four integer-pairs modulo $M_l$:
\begin{enumerate}
    \item $(Y_i, -Y_i-1)$, when $r^+_{ic,l} + \Delta^+_{i,l} \in [0,\Gamma),~ r^-_{ic,l} + \Delta^-_{i,l} \in [0,\Gamma)$;
    \item $(Y_i, -Y_i)$, when $r^+_{ic,l} + \Delta^+_{i,l} \in [0,\Gamma),~ r^-_{ic,l} + \Delta^-_{i,l} \in [\Gamma,2\Gamma)$;
    \item $(Y_i-1,-Y_i-1)$, when $r^+_{ic,l} + \Delta^+_{i,l} \in  (-\Gamma,0),~ r^-_{ic,l} + \Delta^-_{i,l} \in [0,\Gamma)$;
    \item $(Y_i-1,-Y_i )$, when $r^+_{ic,l} + \Delta^+_{i,l} \in (-\Gamma,0),~ r^-_{ic,l} + \Delta^-_{i,l} \in [\Gamma,2\Gamma)$.
\end{enumerate}

Perturbed by errors, $\langle\lfloor\frac{\widetilde{r}^{+}_{i,l}}{\Gamma}\rfloor\rangle_{M_l}$ can be the residue of either $\langle\lfloor \frac{X_i}{\Gamma} \rfloor\rangle_{M_l}$ or $\langle\lfloor \frac{X_i}{\Gamma} \rfloor-1\rangle_{M_l}$. To test that whether $\widetilde{r}_{(i,l)}$ in $S_i$ are all residues of one integer, similarly, we introduce a binary variable $\tau_{(i,l)}\in\{0,1\}$ to each $\widetilde{r}_{(i,l)}$. Then, the shifted common residues are $\hat{r}_{(ic,l)}=\langle\widetilde{r}_{(i,l)}\rangle_{\Gamma}-\tau_{(i,l)}\Gamma=\widetilde{r}_{(ic,l)}-\tau_{(i,l)}\Gamma$. Clearly, if all $\widetilde{r}_{(i,l)}$ are residues of $X_i$ assigned with proper $\tau_{(i,l)}$, $\hat{r}_{(ic,l)}$ would satisfy
\begin{equation}
\label{key-real-mul}
    |\hat{r}_{(ic,l_1)}-\hat{r}_{(ic,l_2)}|<\frac{\Gamma}{2}.
\end{equation}
Then, $\langle \lfloor \frac{ \widetilde{r}_{(i,l)}}{\Gamma}\rfloor + \tau_{(i,l)} \rangle_{M_l}$ must be the residue of one of the six integers modulo $M_l$: (a).$Y_i$, for $l \in \mathbb{U}_a$; (b).$Y_i-1$, for $l \in \mathbb{U}_b$; (c).$Y_i+1$, for $l \in \mathbb{U}_c$; (d).$-Y_i$, for $l \in \mathbb{U}_d$; (e).$-Y_i-1$, for $l \in \mathbb{U}_e$; (f).$-Y_i+1$, for $l \in \mathbb{U}_f$.

With a similar reasoning as previous proof, when $r^{+}_{ic} \in [\frac{\Gamma}{4}, \frac{\Gamma}{2}]$, only case 1) may happen in (\ref{key-real-mul}), which results in that only cases a), c), d) and e) can happen accordingly. 
It is noted that $|\mathbb{U}_a| \cdot |\mathbb{U}_c|=0$, i.e., one of $\mathbb{U}_a$ and $\mathbb{U}_c$ should be empty, since for any $l_1 \in \mathbb{U}_a$ and $l_2 \in \mathbb{U}_c$,
\begin{equation}
|\widetilde{r}_{ic,l_1}-(\widetilde{r}_{ic,l_2}-\Gamma)| \geq |(r^{+}_{ic} + \Delta^{+}_{i,l_1})-( r^{+}_{ic}+ \Delta^{+}_{i,l_2}-\Gamma)| > \frac{\Gamma}{2},
\end{equation}
violating (\ref{key-real-mul}).
Likewise, we have $|\mathbb{U}_d|\cdot|\mathbb{U}_e|=0$ and $|\mathbb{U}_c|\cdot|\mathbb{U}_e|=0$.
Thus, the recovered folding number $q_i$ will fall into one of the following forms in case 1): 
\begin{enumerate}
    \item $q_i \equiv Y_i \mod lcm(M_l, l\in \mathbb{U}_a)$ and $q_i \equiv -Y_i-1 \mod lcm(M_l, l\in \mathbb{U}_e)$
    \item $q_i \equiv Y_i+1 \mod lcm(M_l, l\in \mathbb{U}_c)$ and $q_i \equiv -Y_i \mod lcm(M_l, l\in \mathbb{U}_d)$.
    \item $q_i \equiv Y_i \mod lcm(M_l, l\in \mathbb{U}_a)$ and $q_i \equiv -Y_i \mod lcm(M_l, l\in \mathbb{U}_d)$.
\end{enumerate}

Without loss of generality, we assume that at least $\lceil \frac{L}{N}\rceil$ residues in $S_i$ are from $\{X_1,-X_1\}$, corresponding to the moduli set $\mathbb{U}_1=\{M_1,M_2,\dots,M_{l_k}\}$, where $l_k\geq\lceil\frac{L}{N}\rceil$.
Based on the above understanding, $q_i$ must be in one of the forms, subject to the qualification of (\ref{key-real-mul}).
\begin{itemize}
\item $q_i \equiv Y_1 \mod lcm(M_l, l\in \mathbb{U}_{1a})$\quad and\quad$q_i \equiv -Y_1-1 \mod lcm(M_l, l\in \mathbb{U}_{1e})$
\end{itemize}
\begin{itemize}
\item $q_i \equiv Y_1+1 \mod lcm(M_l, l\in \mathbb{U}_{1c})$\quad and\quad $q_i \equiv -Y_1 \mod lcm(M_l, l\in \mathbb{U}_{1d})$
\end{itemize}
\begin{itemize}
\item $q_i \equiv Y_1 \mod lcm(M_l, l\in \mathbb{U}_{1a})$\quad and\quad $q_i \equiv -Y_1 \mod lcm(M_l, l\in \mathbb{U}_{1d})$
\end{itemize}
We study the first situation first, where $q_i$ satisfies
\begin{equation}
\label{folding-number-real}
   \left \{
  \begin{array}{l}
     q_i\equiv Y_1\mod lcm(M_l\in \mathbb{U}_{1a})\\
   q_i\equiv -Y_1-1\mod lcm(M_l\in \mathbb{U}_{1e})\\
  q_i\equiv \lfloor\frac{\widetilde{r}_{(i,l')}}{\Gamma}\rfloor+\tau_{(i,l')}\mod M_{l'}
  \end{array}.
  \right. 
\end{equation}
where $\mathbb{U}_{1a}\cup\mathbb{U}_{1e}=\mathbb{U}_1$ and $l'\in\{l_k+1,l_k+2,\dots,L\}$.
Without loss of generality, we assume that $lcm(M_l\in\mathbb{U}_{1a})>lcm(M_l\in\mathbb{U}_{1e})$.
Then, $q_i=klcm(M_l\in\mathbb{U}_{1a})+Y_1$.
Clearly, when $Y_1\not\equiv -Y_1-1\mod lcm(M_l\in\mathbb{U}_{1e})$, $q_i$ will contradict the dynamic range assumed with $k\geq 1$.
Only when $Y_1\equiv -Y_1-1\mod lcm(M_l\in\mathbb{U}_{1e})$ or $lcm(M_l\in\mathbb{U}_{1e})=\emptyset$ is satisfied will $q_i\in[0,D_q)$, i.e., $q_i=Y_1$.
So (\ref{folding-number-real}) changes to
\begin{equation}
\label{folding-number-real-1}
   \left \{
  \begin{array}{l}
  q_i\equiv Y_1\mod lcm(M_l\in \mathbb{U}_{1})\\
  q_i\equiv \lfloor\frac{\widetilde{r}_{(i,l')}}{\Gamma}\rfloor+\tau_{(i,l')}\mod M_{l'}
  \end{array}
  \right. 
\end{equation}
Thus, we have $q_i=klcm(M_l\in \mathbb{U}_{1})+Y_1$, where $lcm(M_l\in \mathbb{U}_1)> D_q$.
Similarly, if at least one $l'\in[l_k+1,L]$ satisfies $Y_1 \not\equiv \lfloor\frac{\widetilde{r}_{(i,l')}}{\Gamma}\rfloor+\tau_{(i,l')}\mod M_l'$, $q_i$ will exceed the range $D_q$, leading to a contradiction.

That is to say, $q_i= Y_1$ is the only efficient solution with at least $\lceil\frac{L}{N}\rceil$ residues are from $\{X_1,-X_1\}$ and $lcm(M_l\in \mathbb{U}_{1a})> lcm(M_l\in \mathbb{U}_{1e})$.
The shifted common residues assigned to $q_i$ are $\hat r^{+}_{1c,l}$ and $\hat r^{\pm}_{ic,l}$. Based on (\ref{key-real-mul}), we have $|\hat r^{\pm}_{ic,l_1}-\hat r^{+}_{1c,l_2}|=|\hat r^{\pm}_{ic,l_1}-(r^{+}_{1c}+\Delta^{+}_{1c,l_2})|<\frac{\Gamma}{2}$, i.e., $|\hat r^{\pm}_{ic,l_1}-r^{+}_{1c}|<\frac{3\Gamma}{4}$.
Therefore, $|\hat{X}_1-X_1|$ is equal to
\begin{equation}
|Y_1\Gamma+\frac{\sum_{l\in\mathbb{U}_{1a}}\hat{r}^{+}_{1c,l}+\sum_{l\in\mathbb{U}_{1e}}\hat{r}^{-}_{1c,l}+\sum_{l\in\overline{\mathbb{U}}_1}\hat{r}^{\pm}_{ic,l}}{L}-X_1|<\frac{3\Gamma}{4}.
\end{equation}

In the following, we study the second situation, where $q_i \equiv Y_1+1 \mod lcm(M_l, l\in \mathbb{U}_{1c})$\quad and\quad $q_i \equiv -Y_1 \mod lcm(M_l, l\in \mathbb{U}_{1d})$.
Thus, (\ref{folding-number-real}) becomes  
\begin{equation}
\label{folding-number-real-2}
   \left \{
  \begin{array}{l}
  q_i\equiv Y_1+1\mod lcm(M_l\in \mathbb{U}_{1c})\\
  q_i\equiv -Y_1\mod lcm(M_l\in \mathbb{U}_{1d})\\
  q_i\equiv \lfloor\frac{\widetilde{r}_{(i,l')}}{\Gamma}\rfloor+\tau_{(i,l')}\mod M_{l'}
  \end{array}
  \right. 
\end{equation}
Similarly, under the assumption $lcm(M_l\in\mathbb{U}_{1c})>lcm(M_l\in\mathbb{U}_{1d})$, $q_i=Y_1+1$ is the only efficient solution within $D_q$. So $|\hat{X}_1-X_1|$ is equal to
\begin{equation}
\begin{aligned}
&|(Y_1+1)\Gamma+\frac{\sum_{l\in\mathbb{U}_{1c}}\hat{r}^{+}_{1c,l}+\sum_{l\in\mathbb{U}_{1d}}\hat{r}^{-}_{1c,l}+\sum_{l\in\overline{\mathbb{U}}_{1}}\hat{r}^{\pm}_{ic,l}}{L}-X_1|\\
&<|\lfloor\frac{X_1}{\Gamma}\rfloor\Gamma+\frac{\sum_{l\in\mathbb{U}}({r}^{+}_{1c}+\frac{3\Gamma}{4})}{L}-X_1|=\frac{3\Gamma}{4}.
\end{aligned}
\end{equation}
We can reach the same conclusion $|\hat{X_i}-X_i|<\frac{3\Gamma}{4}$ in the last situation, which is omitted. Q.E.D.
\end{proof}
\begin{algorithm}
\caption{Robust Remaindering Decoding of Multiple Frequencies in Real Waveform}
\textbf{Input}: Moduli set: $\mathcal{M}=\{m_1=M_l\Gamma|l=1,2,\dots,L\}$, where $M_l$ are sorted in ascending order;\\
Residue Sets: $\mathcal{R}_l=\{\widetilde{r}^{\pm}_{i,l}|i=1,2,\dots,N\}$, $l=1,2,\dots,L$.
\begin{algorithmic}[1]

\STATE \textbf{Repeat}: Propose a clustering assignment
\STATE Following the proposed clustering by selecting one residue from each $\mathcal{R}_l$ to obtain a $L$-residue clustering $S_i=\{\widetilde{r}_{(i,l)}|l=1,2,\dots,L\}$, $i=1,2,\dots,N$, where $(i,l)$ denotes the index of the residues assigned to $S_i$.
\STATE Assign a parameter $\tau_{(i,l)}\in\{0,1\}$ to each residue in $S_i$ randomly.
\STATE Calculate the shifted common residues $\hat{r}_{(ic,l)}=\langle\widetilde{r}_{(i,l)}\rangle_{\Gamma}-\tau_{(i,l)}\Gamma$ corresponding to each $\widetilde{r}_{(i,l)}$ in $S_i$.
\STATE For each $S_{i}$, calculate $q_i\equiv\frac{\widetilde{r}_{(i,l)}-\hat{r}_{(ic,l)}}{\Gamma}\mod M_l$ via CRT.
\STATE\textbf{Until}: There exist $N$ clusterings $S_i$ such that $q_i\in[0,D_q)$ and each residue in $S_i$ satisfies (\ref{key-real-mul}) for $i=1,2,\dots,N$, where $D_q=\min_{\mathbb{U}}\frac{lcm(M_l\in \mathbb{U})+lcm(M_l\in(\overline{\mathbb{U}})}{2}-1$ and $\mathbb{U}\cup\overline{\mathbb{U}}=\{M_1,M_2,\dots,M_{\lceil\frac{L}{N}\rceil}\}$.
\end{algorithmic}
\textbf{Output}: $\hat{X_i}=q_i\Gamma+\frac{\sum^{L}_{l=1}\hat{r}_{(ic,l)}}{L}$.
\label{alg:real_decoding}
\end{algorithm}

\end{document}